\def\mathbi#1{\textbf{\itshape #1}}			
\theoremstyle{plain}
\newtheorem{theorem}{Theorem}[section]
\newtheorem{proposition}[theorem]{Proposition}
\newtheorem{lemma}[theorem]{Lemma}
\newtheorem{corollary}[theorem]{Corollary}
\theoremstyle{definition}
\newtheorem{definition}[theorem]{Definition}
\newtheorem{assumption}[theorem]{Assumption}
\newtheorem{example}[theorem]{Example}
\newtheorem{remark}[theorem]{Remark}
\numberwithin{equation}{section}
\DeclareMathOperator{\sgn}{sgn}
\newcommand{\mail}[1]{\href{mailto:#1}{\texttt{#1}}}							
\begin{document}

\title{Revisiting \\ the Implied Remaining Variance framework \\ of Carr and Sun (2014): \\ Locally consistent dynamics and sandwiched martingales}
\author[1]{Claude Martini}
\author[2]{Iacopo Raffaelli\thanks{Corresponding author. Email: \mail{iacopo.raffaelli@sns.it} \\ We thank Maria Elvira Mancino for her valuable help.}}
\affil[1]{\textit{Zeliade Systems, Paris, France}}
\affil[2]{\textit{Scuola Normale Superiore, Pisa, Italy}}
\maketitle

\begin{abstract}

\noindent Implied volatility is at the very core of modern finance, notwithstanding standard option pricing models continue to derive option prices starting from the joint dynamics of the underlying asset price and the spot volatility. These models often causes difficulties: no closed formulas for prices, demanding calibration techniques, unclear maps between spot and implied volatility. Inspired by the practice of using implied volatility as quoting system for option prices, models for the joint dynamics of the underlying asset price and the implied volatility have been proposed to replace standard option pricing models. Starting from~\textcite{CarrSun:2014}, we develop a framework based on the Implied Remaining Variance where minimal conditions for absence of arbitrage are identified, and smile bubbles are dealt with. The key concepts arising from the new IRV framework are those of \emph{locally consistent dynamics} and \emph{sandwiched martingale}. Within the new IRV framework, the results of~\textcite{SchweizerWissel:2008b} are reformulated, while those of~\textcite{ElAmraniJacquierMartini:2020} are independently derived.
\medskip

\noindent \emph{Keywords:} implied volatility, absence of arbitrage, bubble, SDE.

\end{abstract}
\section{Introduction}
\label{sec:Introduction}

Today it is impossible to talk about financial markets without referring to implied volatility. Implied volatility computed from the Black-Scholes-Merton model is indeed a unifying and homogeneous quoting system for option prices. Although the assumptions made by~\textcite{BlackScholes:1973} and~\textcite{Merton:1973} have appeared to be too simplifying, investors see the BSM implied volatility as a functional transformation to improve quote stability and to highlight the information contained in option contracts.  Implied volatility has also proven to be an effective tool for estimating volatility risk premia and investors' risk aversion, like in the papers by~\textcite{ChernovGhysels:2000},~\textcite{Jackwerth:2000},~\textcite{AitSahaliaLo:2000},~\textcite{CarrWu:2009} and~\textcite{BollersleveGibsonZhou:2011}, among many others. 

Even though implied volatility is at the very core of modern ﬁnance, standard option pricing models continue to derive option prices starting from the joint dynamics of the underlying asset price and the spot volatility and then exploiting no-arbitrage arguments (\emph{martingale approach}). We recall the~\textcite{Heston:1993}  model for equity and the SABR model by~\textcite{Haganetal:2002} for interest rates, whereas~\textcite{AlosMancinoWang:2019} provided a comprehensive and updated review. Despite the extensive use of these models,  the related difficulties are many: closed formulas for option prices are not available, implementation requires ad hoc numerical techniques, and the map between spot and implied volatility is often not well-determined. From a market perspective, investors cannot observe spot volatilities, but they trade based on implied volatilities, which are available for a broad spectrum of strike prices and maturities.

For these reasons, the idea of directly modeling the joint dynamics of the underlying asset price and the implied volatility appears so natural. We note that this is equivalent to specify the joint dynamics of the prices of all the assets traded on the market, i.e., the underlying asset and the options. These \emph{market models} can easily fit well-known stylized facts from ﬁnancial time series and make perfect calibration possible. The same intuition led~\textcite{HJM:1992} to develop their framework for interest rates. 

Guaranteeing absence of arbitrage is the true challenge for the new models. While the market is arbitrage-free by construction adopting the martingale approach, market models require to identify no-arbitrage conditions. In the case of implied volatility models, we have to distinguish between static and dynamic arbitrage. If the prices of all the traded assets are martingales under an equivalent martingale measure, then the market is free of dynamic arbitrage. It corresponds to identify no-drift conditions on the joint dynamics, as it is in the Heath-Jarrow-Morton framework.  However, absence of static arbitrage imposes additional restrictions on how prices can jointly evolve, further constraining the processes' state space.

Due to \emph{``the extremely awkward set of conditions required for absence of arbitrage’'} (\cite{DavisObloj:2008}), the first attempts of implied volatility models (e.g.,~\cite{Lyons:1997},~\cite{AvellanedaZhu:1998},~\cite{LedoitSantaClara:1998},~\cite{Schonbucher:1999},~\cite{Hafner:2004},~\cite{Fengler:2005},~\cite{DaglishHullSuo:2007},~\cite{SunNiu:2016}) have been \emph{``largely unsuccessful''} (\cite{CarrWu:2016}). General conditions for the existence of arbitrage-free implied volatility models were derived for the first time by~\citeauthor{SchweizerWissel:2008a} (\citeyear{SchweizerWissel:2008a},~\citeyear{SchweizerWissel:2008b}).~\textcite{Babbar:2001} developed examples for a ﬁxed strike price and Bessel processes. To the best of our knowledge, these are the only positive results for continuous processes.

The paper~\emph{\citetitle{CarrSun:2014}} by~\textcite{CarrSun:2014} pertains to this literature, and we discuss it in detail in Section~\ref{sec:Carr}. Brieﬂy, the key-point is that~\citeauthor{CarrSun:2014} suggested replacing the implied volatility with the Implied Remaining Variance (IRV, hereafter), i.e., the square of the implied volatility multiplied by the time to maturity. Modeling the joint dynamics of the underlying asset price and the IRV, they were able to identify a no-drift condition on the IRV dynamics without making any assumption on the spot volatility dynamics.

\textcite{ElAmraniJacquierMartini:2020} resorted to IRV, too. In particular, they introduced the concept of \emph{smile bubble}: a joint dynamics such that, locally in time (i.e., until some time $\tau$ before the common maturity of the options), the market is arbitrage-free, although option prices are no longer equal to the expectation of the discounted ﬁnal payoﬀs under an equivalent martingale measure. At $\tau$, a change of regime occurs.

Our paper aims to formalize and extend the findings of~\textcite{CarrSun:2014}. The framework we develop allows for much more general IRV dynamics, keeping the original paper's advantages. We identify a necessary and sufficient condition for the option prices to be martingales in the form of a no-drift condition on the IRV dynamics. The set of solutions of the related stochastic differential equation is nothing else than the set of the IRV processes consistent with absence of dynamic arbitrage. We call this SDE the IRV master SDE. 


While~\textcite{CarrSun:2014} did not deal with static arbitrage, the new IRV framework faces both the arbitrage types. The distinction between locally and globally consistent dynamics is the way: if the IRV dynamics is locally consistent, then it ensures absence of dynamic but not static arbitrage; instead, the market is free of (both types of) arbitrage if the IRV process follows a globally consistent dynamics. We say that an IRV process follows a locally consistent dynamics if it satisfies the IRV master SDE; a solution of the IRV master SDE is said to follow a globally consistent dynamics if it vanishes at maturity. In Section~\ref{sec:examples}, we show how the results of~\textcite{SchweizerWissel:2008b} ensure the existence of solutions of the IRV master SDE vanishing at maturity, also providing an explicit family of solutions.

Through locally consistent dynamics, the new IRV framework retrieves also smile bubbles and the other findings of~\textcite{ElAmraniJacquierMartini:2020}.

Finally, we restate the results on IRV in terms of option prices. Indeed, we introduce the concept of sandwiched martingale, and we show the equivalence between an option price described by a sandwiched martingale and an IRV process following a locally consistent dynamics. Examples of sandwiched martingales are provided.

The paper is organized as follows: in Section~\ref{sec:IRVdef}, we give the definition of Implied Remaining Variance, and we set the market environment; in Section~\ref{sec:Carr}, we explain the main advantages and issues of~\textcite{CarrSun:2014}; Section~\ref{sec:masterSDE} contains the results about locally (and globally) consistent dynamics and absence of arbitrage; in Section~\ref{sec:examples}, we provide examples of solutions of the IRV master SDE; in Section~\ref{sec:sandwiched}, we deal with sandwiched martingales; Section~\ref{sec:conclusion} concludes.
\section{Definition of Implied Remaining Variance}
\label{sec:IRVdef}

In this section, we recall the definition of Implied Remaining Variance, and we set the notation for the rest of the paper.

Due to the frequent use of the Black-Scholes option pricing formulas, the following two functions considerably simplify the notation: 
\begin{definition}
	The function $BS: \mathbb{R} \times [0,+\infty] \rightarrow [0,1]$ is defined as
	\[
		BS(k,v) \coloneqq 
		\begin{cases}
			1 & \text{if $v = + \infty$} \\
			\mathcal{N}\big(d_+(k,v)\big) - e^k \cdot \mathcal{N}\big(d_-(k,v)\big) & \text{if $v \in (0,+\infty)$} \\
			\big(1 - e^k \big)_+ & \text{if $v = 0$}
		\end{cases}
	\]
	and the function ${BS}_P: \mathbb{R} \times [0,+\infty] \rightarrow [0,+\infty)$ is defined as 
	\[
		{BS}_P(k,v) \coloneqq 
		\begin{cases}
			e^k & \text{if $v = +\infty$} \\
			e^k \cdot \mathcal{N}\big(- d_-(k,v)\big) - \mathcal{N}\big(- d_+(k,v)\big)  & \text{if $v \in (0,+\infty)$} \\
			\big(e^k - 1\big)_+ & \text{if $v = 0$}
		\end{cases}
	\]
	where $\mathcal{N}(\cdot)$ is the Gaussian cumulative distribution function and $d_\pm(k,v) \coloneqq \frac{-k}{v} \pm \frac{v}{2}$.
\end{definition}

Let $\big(\Omega,\mathcal{F},\mathbb{Q}\big)$ be a probability space describing a perfect market with zero interest rates under the risk-neutral measure. A financial asset is traded whose price is almost surely given by the usual exponential process
\[
\begin{cases}
	S_t = S_0 \cdot \exp\Big\{\int_0^t \sigma_s \, dW_s - \frac{1}{2} \int_0^t \sigma_s^2 \, ds \Big\} \\
	S_0  > 0
\end{cases}
\]
for any $t \in [0,+\infty)$, where $\sigma$ is an a.s. positive process and $W$ is a standard Brownian motion. 

For convenience, we recall that the Black-Scholes model requires $\sigma_t \equiv \sigma > 0$, so that the Black-Scholes prices of a European call and put option with this asset as underlying, strike price $K$ and maturity $T$, are given by
\begin{align*}
	C^{BS}_t(K,T) & = S_t \cdot BS\Big( k_t, \sigma \sqrt{T-t} \Big) \\
	P^{BS}_t(K,T) & = S_t \cdot {BS}_P\Big( k_t, \sigma \sqrt{T-t} \Big)
\end{align*}
for any $t \in [0,T)$, where $k_t\coloneqq \log \big(K / S_t \big)$ is the log-moneyness process.

In general, assume to observe on the market the prices of both the asset and the options at any time $t \in [0,T)$. The \emph{Implied Remaining Variance (IRV)} corresponding to the European call option with strike price $K$ and maturity $T$, denoted as $\omega_t(K,T)$, is then implicitly defined by the sequence of equalities:
\[
C^{obs}_t(K,T) = C^{BS}_t(K,T) = S_t \cdot BS\Big(k_t, \sqrt{\omega_t(K,T)}\Big).
\]
In other words, denoting the Black-Scholes implied volatility with $I_t(K,T)$, it is:
\[
\omega_t(K,T) = I_t^2(K,T) \cdot \big(T-t\big).
\]

A fundamental assumption we make is:
\begin{assumption} \label{ass:PCP}
	European call and put options with identical strike price and maturity share the same IRV, that is:
	\[
	C^{obs}_t(K,T) = S_t \cdot BS\Big(k_t, \sqrt{\omega_t(K,T)}\Big) \quad \Longleftrightarrow \quad P^{obs}_t(K,T) = S_t \cdot {BS}_P\Big(k_t, \sqrt{\omega_t(K,T)} \Big)
	\]
	for any $t \in [0,T)$. Equivalently, the \emph{Put-Call-Parity} holds on the market:
	\[
	P^{obs}_t(K,T) + S_t = C^{obs}_t(K,T) + K
	\]
	for any $t \in [0,T)$.
\end{assumption}

\section{The IRV framework by Carr and Sun (2014)} 
\label{sec:Carr}

This section deals with the IRV framework suggested by~\citeauthor{CarrSun:2014} in~\emph{\citetitle{CarrSun:2014}} (\citeyear{CarrSun:2014}, Journal of Fixed Income). We show how this framework can represent an alternative to standard option pricing models (subsection~\ref{subsec:CarrOK}) and the related issues (subsection~\ref{subsec:CarrNO}).


\subsection{Summary of Carr and Sun (2014)}
\label{subsec:CarrOK}

Given two time stamps $0 < T < T'$,~\citeauthor{CarrSun:2014} considered a call swaption maturing at $T$, written on an underlying swap maturing at $T'$. Let $F_t(T,T')$ be the underlying forward swap rate at time $t \in [0,T]$. At time $T$, the call swaption's payoff in units of the forward-starting annuity is $(F_T - K)_+$, where $K > 0$ is the strike price and the arguments $(T,T')$ of $F$ have been omitted because fixed. For the same reason, we write $\omega(K)$ in the following.

Under the forward swap measure $\mathbb{Q}$ (see~\cite{Wu:2009}),~\citeauthor{CarrSun:2014} proposed to replace the stochastic volatility model
\[
\begin{cases}
	dF_t = \sigma_t F_t \, dW_t \\
	d\sigma_t = \alpha(\sigma_t) \, dt + \beta(\sigma_t) \, dH_t \\
	d \langle W, H \rangle_t = r \, dt
\end{cases}
\]
where $\alpha(\sigma)$ and $\beta(\sigma)$ are real-valued functions, $r \in [-1,1]$ is a constant, and $W$ and $H$ are standard Brownian motions, with the \emph{IRV model}
\begin{equation}\label{eq:CSIRV}
\begin{cases}
	dF_t = \sigma_t F_t \, dW_t \\
	d\omega_t(K) = a\big(\omega_t(K)\big) \sigma_t^2 \, dt + b\big(\omega_t(K)\big) \sigma_t \, dZ_t \\
	d\langle W, Z \rangle_t = \rho \, dt
\end{cases}
\end{equation}
where $a(\omega)$, $b(\omega)$ are two real-valued deterministic functions of the current IRV level, $\rho \in [-1,1]$ is a constant, and $W$ and $Z$ are standard Brownian motions.

In an arbitrage-free market, the call swaption price process, $C(K)$, must be a $\mathbb{Q}$-martingale. By definition of IRV, it is
\[
C_t(K) = F_t \cdot \mathcal{N}\Big(d_+\big(k_t,\sqrt{\omega_t(K)}\big)\Big) - K \cdot \mathcal{N}\Big(d_-\big(k_t,\sqrt{\omega_t(K)}\big)\Big)
\]
for any $t \in [0,T)$, where $k_t \coloneqq \log(K/F_t)$. Applying It\^o formula to $C(K)$,~\citeauthor{CarrSun:2014} derived the no-drift condition:
\begin{equation}\label{eq:CSnodrift}
	a(\omega_t(K)) = - \Bigg\{ 1 + \frac{\rho  \, b(\omega_t(K))}{2} - \bigg(\frac{1}{\omega_t(K)} + \frac{1}{4} \bigg) \cdot \frac{b^2(\omega_t(K))}{4} + \frac{\rho \, k_t \,  b(\omega_t(K))}{\omega_t(K)} + \frac{k_t^2  \, b^2(\omega_t(K))}{4 \, \omega_t^2(K)} \Bigg\}.
\end{equation}
Hence, the call swaption price is a $\mathbb{Q}$-martingale only if the drift coefficient in the IRV dynamics satisfies Eq.~\eqref{eq:CSnodrift}. 

A remarkable feature of this no-drift condition is that it does not depend on the model chosen for the spot volatility. Indeed, the spot volatility dynamics has been left unspecified. Moreover,~Eq.~\eqref{eq:CSnodrift} provides modelers with the drift coefficient ensuring call swaption price martingality, once speciﬁed the IRV diﬀusion. The analogy with the~\textcite{HJM:1992} framework for interest rates is clear.

On the contrary,~\citeauthor{CarrSun:2014} decided to specify both the drift and diﬀusion terms and thus determine the IRV smile. 
In particular, they chose the following functional forms for $a(\omega)$ and $b(\omega)$:
\begin{align}
	\label{eq:CSa} a(\omega) & = -a_1 \omega + a_0 -1 \\
	\label{eq:CSb} b(\omega) & = \omega
\end{align}
where $a_0 \in \mathbb{R}$ and $a_1 > 0$ are constant. The related IRV model is:
\begin{equation}\label{eq:CSdw}
\begin{cases}
	dF_t = \sigma_t F_t \, dW_t \\
	d\omega_t(K) = - \Big[1 - a_0 + a_1 \, \omega_t(K)\Big] \sigma_t^2 \, dt + \omega_t(K) \, \sigma_t \, dZ_t \\
	d\langle W, Z\rangle_t = \rho \, dt.
\end{cases}
\end{equation}
Substituting Eq.~\eqref{eq:CSa}--\eqref{eq:CSb} into Eq.~\eqref{eq:CSnodrift} leads to the quadratic equation
\[
\omega_t^2(K) + 4 \Big(1 - 2 \rho + 4 a_1 \Big) \cdot \omega_t(K) - 4 \Big( 4 a_0 + 4 \rho k_t + k_t^2 \Big) = 0.
\]
The discriminant is
\[
\Delta_t(K) = 16 \Big[k_t^2 + 4 \rho k_t + 4 a_0 + \big(1 - 2 \rho + 4 a_1\big)^2 \Big]
\]
which depends on $K$ through $k_t$ and is almost surely positive under the assumption 
\[
a_0 > \rho^2.
\]
Assuming also
\[
1 - 2 \rho + 4 a_1 \ge 0,
\]
the unique positive solution is\footnote{Eq.~\eqref{eq:CSrightsol} corrects Eq. (40) in~\textcite{CarrSun:2014}.}
\begin{equation}\label{eq:CSrightsol}
	\omega_t(K) = 2 \Big(2 \rho - 1 - 4 a_1 + \sqrt{D_t(K)} \Big)
\end{equation}
where $D_t(K) = \Delta_t(K) / 16 > 0$.


\subsection{Issues with Carr and Sun (2014)}
\label{subsec:CarrNO}

At this point, we investigate whether the expression for the IRV smile in Eq.~\eqref{eq:CSrightsol} is consistent with the dynamics in Eq.~\eqref{eq:CSdw}. The answer we find is negative. 

Indeed, applying It\^o formula to Eq.\eqref{eq:CSrightsol}, we obtain
\begin{equation}\label{eq:CSdw2}
	d\omega_t = \frac{1}{\sqrt{D_t}} \, \bigg[ 1 + 2 \rho + k_t - \frac{(2 \rho + k_t)^2}{D_t} \bigg] \, \sigma_t^2 \, dt - 2 \, \frac{2 \rho + k_t}{\sqrt{D_t}} \, \sigma_t \, dW_t
\end{equation}
where the argument $(K)$ has been suppressed for notational simplicity. 

To compare Eq.~\eqref{eq:CSdw} and~\eqref{eq:CSdw2}, we decompose
\[
dZ_t = \rho \, dW_t + \sqrt{1- \rho^2} \, dW_t^{\bot}
\]
or equivalently, under the assumption $\rho \neq 0$ (if $\rho = 0$, Eq.~\eqref{eq:CSdw} and~\eqref{eq:CSdw2} cannot be consistent because driven by two independent Brownian motions),
\[
dW_t = \frac{1}{\rho} \, dZ_t - \sqrt{\frac{1}{\rho^2}-1} \, dW_t^{\bot}
\]
and we rewrite Eq.~\eqref{eq:CSdw2} as 
\[
	d\omega_t = \frac{1}{\sqrt{D_t}} \, \bigg[ 1 + 2 \rho + k_t - \frac{(2 \rho + k_t)^2}{D_t} \bigg] \, \sigma_t^2 \, dt - 2 \, \frac{2 \rho + k_t}{\sqrt{D_t}} \, \sigma_t \, \bigg( \frac{1}{\rho} \, dZ_t - \sqrt{\frac{1}{\rho^2}-1} \, dW_t^{\bot} \bigg).
\]

The last dynamics can be consistent with that in Eq.~\eqref{eq:CSdw} only if 
\[
2 \, \frac{2 \rho + k_t}{\sqrt{D_t}} \, \sigma_t \cdot \sqrt{\frac{1}{\rho^2}-1} = 0
\]
that is only if:
\begin{itemize}
	\item either $\rho = - k_t / 2$, which contradicts the assumptions that $\rho$ is a constant (unless $F$ is constant over time) and $k_t \in \mathbb{R}$ (not necessarily in $[-2,2]$);
	\item or $\rho = \pm 1$.
\end{itemize}

We start by considering the case $\rho = -1$. Under the assumptions $a_0 > 1$ and $a_1 \ge - 3/4$, we should have 
\[
\begin{cases}
	d\omega_t = - \big[1 - a_0 + a_1 \omega_t\big] \sigma_t^2 \, dt + \omega_t \sigma_t \, dZ_t \\
	d\omega_t = \frac{1}{\sqrt{D_t}} \Big[ -1 + k_t - \frac{(k_t - 2)^2}{D_t} \Big] \sigma_t^2 \, dt + 2 \frac{k_t - 2}{\sqrt{D_t}} \sigma_t \, dZ_t
\end{cases}
\]
which would imply
\[
\omega_t = 2 \, \frac{k_t - 2}{\sqrt{D_t}}
\]
and therefore
\[
\begin{cases}
	d\omega_t = - \big[1 - a_0 + a_1 \omega_t\big] \sigma_t^2 \, dt + \omega_t \sigma_t \, dZ_t \\
	d\omega_t = - \Big[ 1 - \Big(1 + \frac{1}{\sqrt{D_t}}\Big) - \frac{1}{2} \omega_t + \frac{1}{4 \sqrt{D_t}} \omega_t^2 \Big] \sigma_t^2 \, dt + \omega_t \sigma_t \, dZ_t.
\end{cases}
\]
The two dynamics would be consistent only if 
\[
\begin{cases}
	a_0 = 1 + \frac{1}{\sqrt{D_t}} \\
	a_1 = - \frac{1}{2} \omega_t \\
	\frac{1}{4 \sqrt{D_t}} = 0
\end{cases}
\]
but the last equality cannot hold.

In the case $\rho = +1$, the result is the same. Under the assumptions $a_0 > 1$ and $a_1 \ge 1/4$, we should have
\[
\begin{cases}
	d\omega_t = - \big[1 - a_0 + a_1 \omega_t\big] \sigma_t^2 \, dt + \omega_t \sigma_t \, dZ_t \\
	d\omega_t = \frac{1}{\sqrt{D_t}} \Big[ 3 + k_t - \frac{(k_t + 2)^2}{D_t} \Big] \sigma_t^2 \, dt - 2 \frac{k_t + 2}{\sqrt{D_t}} \sigma_t \, dZ_t
\end{cases}
\]
that is
\[
\omega_t = - 2 \, \frac{k_t + 2}{\sqrt{D_t}}
\]
and
\[
\begin{cases}
	d\omega_t = - \big[1 - a_0 + a_1 \omega_t\big] \sigma_t^2 \, dt + \omega_t \sigma_t \, dZ_t \\
	d\omega_t = - \Big[ 1 - \Big(1 + \frac{1}{\sqrt{D_t}}\Big) + \frac{1}{2} \omega_t + \frac{1}{4 \sqrt{D_t}} \omega_t^2 \Big] \sigma_t^2 \, dt + \omega_t \sigma_t \, dZ_t.
\end{cases}
\]
Again, the consistency of the two dynamics would require
\[
\begin{cases}
	a_0 = 1 + \frac{1}{\sqrt{D_t}} \\
	a_1 = \frac{1}{2} \\
	\frac{1}{4 \sqrt{D_t}} = 0
\end{cases}
\] 
but the last equality cannot hold. Therefore, we can conclude that the expression for the IRV smile in Eq.~\eqref{eq:CSrightsol} is not consistent with the dynamics set in Eq.~\eqref{eq:CSdw}. 

A more severe issue in the article by~\citeauthor{CarrSun:2014} is the following. The no-drift condition in Eq.~\eqref{eq:CSnodrift} is a necessary condition for the call swaption price to be a $\mathbb{Q}$-driftless process, that is, under certain integrability assumptions, a $\mathbb{Q}$-martingale. However, the swaption price martingality is not sufficient to ensure absence of arbitrage. In particular, in a market where swaptions with different strike prices are traded, static arbitrage opportunities continue to exist even though the swaptions' prices are martingales. This is a crucial difference with respect to the Heath-Jarrow-Morton framework for interest rates, and we deal with it in detail in the next section.

Finally, we recall that $a(\omega)$ and $b(\omega)$ in Eq.~\eqref{eq:CSIRV}~--~respectively, the drift and diffusion terms in the IRV dynamics~--~are supposed to be deterministic functions of the current IRV value: this is a somewhat restrictive assumption.
\section{Locally consistent dynamics}
\label{sec:masterSDE}

This section investigates what minimal properties a stochastic process must satisfy to be an appropriate candidate as IRV process of an arbitrage-free market. At the same time, smile bubbles (\cite{ElAmraniJacquierMartini:2020}) are introduced into the new IRV framework.


\subsection{Single option market}

We start by considering a market on which one European call option, with strike price $K > 0$ and maturity $T > 0$, is traded. Being fixed, we omit the arguments $(K,T)$ of $C$ and $\omega$. We recall that $k_t \coloneqq \log(K/S_t)$, and we set the notation $X^{\tau}_t \coloneqq X_{t \wedge \tau}$, for a process $X$ and a stopping time $\tau$.

First, we prove the following lemma:
\begin{lemma}\label{lemma:locmart}
	On a filtered probability space $\big(\Omega,\mathcal{F},(\mathcal{F}_t)_{t \ge 0},\mathbb{Q}\big)$, let $S$ be a positive martingale and $\omega$ a process with values in $[0,+\infty]$, almost surely. Define the process $C_t^{BS} \coloneqq S_t \cdot BS\big(k_t,\sqrt{\omega_t}\big)$, for any $t \in [0,+\infty)$. Then $C^{BS}$ is a martingale if and only if it is a local martingale.
\end{lemma}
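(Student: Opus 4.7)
The key observation, read straight off the definition of $BS$, is that $BS$ takes values in $[0,1]$, so the process $C^{BS}$ is pathwise dominated by $S$: $0 \le C^{BS}_t \le S_t$ almost surely for every $t$. The forward implication (martingale $\Rightarrow$ local martingale) is trivial; my plan for the converse is to use this pathwise domination to promote local martingality to true martingality via a uniform integrability argument.

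Assume $C^{BS}$ is a local martingale with localizing sequence $\tau_n \uparrow \infty$ almost surely. First I would observe that, being nonnegative, $C^{BS}$ is automatically a supermartingale: conditional Fatou applied to the identity $C^{BS}_{s \wedge \tau_n} = \mathbb{E}[C^{BS}_{t \wedge \tau_n} \mid \mathcal{F}_s]$ for $s < t$, as $n \to \infty$, yields $C^{BS}_s \ge \mathbb{E}[C^{BS}_t \mid \mathcal{F}_s]$. It therefore suffices to show that $t \mapsto \mathbb{E}[C^{BS}_t]$ is constant, since a nonnegative supermartingale with constant mean is automatically a martingale.

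To prove constancy of the mean, fix $t \ge 0$. Because $S$ is a true martingale and $t \wedge \tau_n$ is a bounded stopping time, optional sampling gives $S_{t \wedge \tau_n} = \mathbb{E}[S_t \mid \mathcal{F}_{t \wedge \tau_n}]$, so $\{S_{t \wedge \tau_n}\}_n$ is uniformly integrable as a family of conditional expectations of the single $L^1$ variable $S_t$; the pathwise bound then transfers this uniform integrability to $\{C^{BS}_{t \wedge \tau_n}\}_n$. Since $\tau_n \uparrow \infty$, for each sample path $C^{BS}_{t \wedge \tau_n}$ coincides with $C^{BS}_t$ for all $n$ large enough, so $C^{BS}_{t \wedge \tau_n} \to C^{BS}_t$ almost surely; uniform integrability upgrades this to $L^1$ convergence. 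The stopped martingale property gives $\mathbb{E}[C^{BS}_{t \wedge \tau_n}] = \mathbb{E}[C^{BS}_0]$ for every $n$, and passing to the limit yields $\mathbb{E}[C^{BS}_t] = \mathbb{E}[C^{BS}_0]$, as desired.

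I do not expect a real obstacle here. The statement is essentially the folklore principle that a nonnegative local martingale dominated by a true martingale is itself a martingale, specialised to the Black-Scholes functional; in the present setting the domination is automatic from the boundedness $BS \le 1$, so in particular no further regularity or integrability hypothesis on $\omega$ is needed.
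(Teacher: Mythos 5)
Your proof is correct, and it takes a genuinely different route from the paper's. The paper's proof invokes Put-Call-Parity: writing $P^{BS}_t = S_t \cdot BS_P(k_t,\sqrt{\omega_t}) = C^{BS}_t + K - S_t$, it observes that $P^{BS}$ is a local martingale uniformly bounded by the \emph{constant} $K$, hence a true martingale, and then transfers back to $C^{BS}$ via Put-Call-Parity once more. You instead work directly with $C^{BS}$, using the pathwise domination $0 \le C^{BS}_t \le S_t$ by the dominating \emph{martingale} $S$ and a uniform-integrability argument (optional sampling gives $S_{t\wedge\tau_n} = \mathbb{E}[S_t \mid \mathcal{F}_{t\wedge\tau_n}]$, conditional expectations of a fixed $L^1$ variable are UI, domination transfers UI to $C^{BS}_{t\wedge\tau_n}$, and a nonnegative supermartingale with constant mean is a martingale). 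Your route is the general principle that a nonnegative local martingale dominated pathwise by a true martingale is itself a martingale, and it makes no use of the put function or of Put-Call-Parity; the paper's route is shorter in this specific setting because $BS_P$ and Put-Call-Parity have already been set up, but it relies on the structural accident that the put is bounded by a deterministic constant. Each step of your argument — conditional Fatou for the supermartingale property, UI of $\{S_{t\wedge\tau_n}\}_n$, transfer of UI under pathwise domination, Vitali to upgrade a.s. to $L^1$ convergence, and the constant-mean criterion — is sound, so no gap remains.
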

\begin{proof}
	Let $C^{BS}$ be a local martingale. Define the process $P_t^{BS} \coloneqq S_t \cdot {BS}_P\big(k_t,\sqrt{\omega_t}\big)$, for any $t \in [0,+\infty)$. By the Put-Call-Parity, it is $P^{BS}_t = C^{BS}_t + K - S_t$, for any $t \in [0,+\infty)$. Therefore, $P^{BS}$ is given by the sum of a local martingale, a constant and a martingale, and so is itself a local martingale. Since $P^{BS}$ is uniformly bounded by $K$, it is also a martingale. Applying the Put-Call-Parity again, $C^{BS}$ turns out to be equal to the sum of two martingales and a constant, hence is itself a martingale. Trivially, if $C^{BS}$ is a martingale, then it is a local martingale.
\end{proof}

Let $W$ and $Z$ be two independent Brownian motions on $\big(\Omega,\mathcal{F},(\mathcal{F}_t)_{t \ge 0},\mathbb{Q}\big)$. Moreover, assume the same space can accommodate two almost surely positive and finite random variables $S_0$ and $\omega_0$, independent of $\mathcal{F}^W_{\infty}$ and $\mathcal{F}^Z_{\infty}$. 
Finally, consider the left-continuous filtration
\[
\mathcal{G}_t \coloneqq \sigma\big(S_0,\omega_0,W_s,Z_s: \, s \in [0,t] \big), \qquad t \in [0,+\infty),
\]
as well as the collection of null sets
\[
\mathcal{H} \coloneqq \big\{ H \subseteq \Omega: \, \exists G \in \mathcal{G}_{\infty} \text{ with $H \subseteq G$ and $\mathbb{Q}(G) = 0$} \big\},
\]
and define the augmented filtration $(\mathcal{F}_t)_{t \ge 0}$ as
\[
\mathcal{F}_t \coloneqq \sigma\big( \mathcal{G}_t \cup \mathcal{H}\big), \quad t \in [0,+\infty); \qquad \mathcal{F}_{\infty} \coloneqq \sigma \bigg(\bigcup_{t \ge 0} \mathcal{F}_t \bigg).
\]

\begin{theorem}\label{thr:masterSDE}
	Consider the filtered probability space $\big(\Omega, \mathcal{F}, (\mathcal{F}_t)_{t \ge 0}, \mathbb{Q}\big)$ described above, where $(\mathcal{F}_t)_{t \ge 0}$ is the augmented filtration, $W$ and $Z$ the two independent Brownian motions, and $S_0$ and $\omega_0$ the two almost surely positive and finite random variables. Moreover:
	\begin{enumerate}
		\item $\sigma \in L_{loc}^2\big((0,+\infty)\big)$ is such that the process $S$, defined by
		\[
		S_t \coloneqq S_0 \cdot \exp \bigg\{ \int_0^t \sigma_s \, dW_s - \frac{1}{2} \int_0^t \sigma_s^2 \, ds \bigg\}, \qquad \text{a.s.}
		\] 
		for any $t \in [0,+\infty)$, is a martingale.
		\item $\omega$ is a progressively measurable process with values in $[0,+\infty]$ and continuous paths, a.s.
		\item For $n \ge 1$, define the following sequences of stopping times\footnote{We adopt the convention: $\inf\{\emptyset\} = +\infty$.}
		\begin{equation}\label{eq:stopping}
			\tau^{0}_n \coloneqq \inf \big\{t \ge 0 : \omega_t \le 1/n \big\}, \quad \tau^{\infty}_n \coloneqq \inf \big\{t \ge 0 : \omega_t \geq n \big\}, \quad \tau_n \coloneqq \tau^{\infty}_n \wedge \tau^0_n.
		\end{equation}
		\item For any $n \ge 1$, there exist progressively measurable processes $a$, $b$, $c$ such that
		\[
		\int_0^{t \wedge \tau_n} \Big\{ \big\lvert a_s\sigma_s^2 \big\rvert + \big\lvert b_s \omega_s \sigma_s \big\rvert^2 + \big\lvert c_s \omega_s \sigma_s \big\rvert^2 \Big\} \, ds < \infty, \qquad \text{a.s.}
		\]
		and
		\begin{equation}\label{eq:SDE}
			\omega_t^{\tau_n} = \omega_0 + \int_0^{t \wedge \tau_n} a_s \sigma_s^2 \, ds + \int_0^{t\wedge \tau_n} b_s \omega_s \sigma_s \, dZ_s + \int_0^{t \wedge \tau_n} c_s \omega_s \sigma_s \, dW_s, \qquad \text{a.s.}
		\end{equation}
		for any $t \in [0,+\infty)$.
		\item Define the process
		\begin{equation}\label{eq:C}
			C^{BS}_t \coloneqq S_t \cdot BS \big( k_t, \sqrt{\omega_t} \big),
		\end{equation}
		where $k_t \coloneqq \log(K / S_t)$, for any $t \in [0,+\infty)$.
	\end{enumerate}
	
	In this framework, $C^{BS, \tau_n}$ is a martingale if and only if the following no-drift condition holds:
	\begin{multline}\label{eq:nodrift}
		a_t^{\tau_n} = \Bigg(\frac{{(b_t^{\tau_n}})^2 + ({c_t^{\tau_n}})^2}{16}\Bigg) \, ({\omega_t^{\tau_n}})^2 + \Bigg(\frac{({b_t^{\tau_n}})^2 + ({c_t^{\tau_n}})^2- 2 {c_t^{\tau_n}}}{4}\Bigg) \, \omega_t^{\tau_n} \\
		- \Bigg[\frac{({b_t^{\tau_n}})^2 + ({c_t^{\tau_n}})^2}{4} ({k_t^{\tau_n}})^2 + c_t^{\tau_n} \, k_t^{\tau_n} + 1\Bigg], \quad \text{a.s.}
	\end{multline}
	for any $t \in [0,+\infty)$, $n \ge 1$.
\end{theorem}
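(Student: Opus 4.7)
The plan is to combine Lemma~\ref{lemma:locmart} with a direct Itô expansion. First, note that since $k^{\tau_n}_t=\log(K/S^{\tau_n}_t)$, we can write
\[
C^{BS,\tau_n}_t = S^{\tau_n}_t\cdot BS\!\left(\log(K/S^{\tau_n}_t),\,\sqrt{\omega^{\tau_n}_t}\right),
\]
and since $S^{\tau_n}$ is a positive martingale while $\omega^{\tau_n}$ is $[0,+\infty]$-valued with continuous paths, Lemma~\ref{lemma:locmart} applied to the pair $(S^{\tau_n},\omega^{\tau_n})$ reduces the claim to showing that $C^{BS,\tau_n}$ is a local martingale if and only if Eq.~\eqref{eq:nodrift} holds.

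The next ingredient is the following set of Black--Scholes identities for $v>0$, obtained by direct differentiation of $BS(k,v)=\mathcal{N}(d_+)-e^k\mathcal{N}(d_-)$ together with the identity $\mathcal{N}'(d_+)=e^k\mathcal{N}'(d_-)$ (which follows from $d_+^2/2-d_-^2/2=-k$):
\[
BS_k=-e^k\mathcal{N}(d_-),\quad BS_v=\mathcal{N}'(d_+)>0,\quad BS_{kv}=\frac{BS_v\,d_+}{v},\quad BS_{vv}=\frac{BS_v\,d_+d_-}{v},
\]
together with the Black--Scholes PDE in these variables, $BS_{kk}-BS_k=BS_v/v$. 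On $[0,\tau_n]$ the process $\omega$ stays in $[1/n,n]$, so $f(k,\omega):=BS(k,\sqrt{\omega})$ is $C^2$ and the chain rule $v=\sqrt{\omega}$ yields the partials of $f$ in closed form in terms of $BS_v$ and $BS_{vv}$.

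Applying Itô's formula to $S_t f(k_t,\omega_t)$ on $[0,\tau_n]$, using $dk_t=-\sigma_t\,dW_t+\tfrac{1}{2}\sigma_t^2\,dt$, the SDE~\eqref{eq:SDE}, and the covariations $d\langle k,\omega\rangle_t=-c_t\omega_t\sigma_t^2\,dt$, $d\langle S,k\rangle_t=-S_t\sigma_t^2\,dt$, $d\langle S,\omega\rangle_t=c_t S_t \omega_t\sigma_t^2\,dt$, the finite variation part of $dC^{BS,\tau_n}_t$ factors as $S_t\sigma_t^2\,\Phi_t\,dt$, where $\Phi_t$ depends on the partials of $f$ and on $a_t,b_t,c_t,\omega_t,k_t$. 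Substituting the Black--Scholes identities, the $BS_k$ terms cancel thanks to the PDE; factoring out $BS_v/(2v)>0$ and using $\omega=v^2$ together with $d_+d_-=k^2/v^2-v^2/4$, routine algebra shows that $\Phi_t\equiv 0$ is equivalent to Eq.~\eqref{eq:nodrift}. Since $C^{BS,\tau_n}$ is a continuous semimartingale, uniqueness of the Doob--Meyer decomposition forces $\Phi\equiv 0$ on $[0,\tau_n]$ exactly when $C^{BS,\tau_n}$ is a local martingale; combined with the reduction above, this yields the equivalence.

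The main obstacle I expect is the bookkeeping of the Itô expansion: one must carry the chain rule $v=\sqrt{\omega}$ through all partials, track both diffusion coefficients $b_t$ and $c_t$ (including the $c_t$-induced contribution to $\langle S,\omega\rangle_t$), and then massage the resulting polynomial in $k_t$ and $\omega_t$ into the specific quadratic form displayed in Eq.~\eqref{eq:nodrift} without errors. No conceptual difficulty is anticipated once the Black--Scholes PDE in log-moneyness variables is in hand.
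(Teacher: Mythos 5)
Your proof is correct and follows essentially the same route as the paper: an It\^o expansion of $C^{BS,\tau_n}$ to identify the drift term, combined with Lemma~\ref{lemma:locmart} to pass from local martingale to martingale. The only organizational differences are that you invoke Lemma~\ref{lemma:locmart} at the outset rather than at the end, you make the chain rule $v=\sqrt{\omega}$ and the Black--Scholes PDE explicit, and you replace the paper's direct $L^2_{loc}$ verification of the stochastic integrands (needed to conclude $C^{BS,\tau_n}$ is a local martingale once the drift vanishes) with an appeal to uniqueness of the semimartingale decomposition -- a valid shortcut since assumption~4 already guarantees the It\^o integrals are well defined.
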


\begin{proof}
	For any $n \ge 1$ and $t \in [0,+\infty)$, it is
	\begin{equation*}
		C^{BS, \tau_n}_t = S_t^{\tau_n} \cdot \mathcal{N}\Big(d_+\Big(k_t^{\tau_n},\sqrt{\omega_t^{\tau_n}}\Big)\Big) - K \cdot \mathcal{N}\Big( d_- \Big(k_t^{\tau_n},\sqrt{\omega_t^{\tau_n}}\Big)\Big), \qquad a.s.
	\end{equation*}
	Suppress the arguments of $d_{\pm}$ for notational simplicity and apply the It\^o formula to $C^{BS,\tau_n}$:
	\[
	\begin{split}
		C^{BS,\tau_n}_t = C^{BS}_0 & + \int_0^{t \wedge \tau_n} b_s \frac{K \sqrt{\omega_s} \, \varphi(d_-)}{2} \sigma_s \, dZ_s + \int_0^{t \wedge \tau_n}\Bigg[ S_s \, \mathcal{N}(d_+) + c_s \frac{ K \sqrt{\omega_s} \, \varphi(d_-)}{2} \Bigg] \sigma_s \, dW_s \\	
		& + \int_0^{t \wedge \tau_n}\Bigg[ a_s \frac{K \, \varphi(d_-)}{2 \sqrt{\omega_s}} + c_s \frac{K \big(k_s + \frac{\omega_s}{2}\big) \, \varphi(d_-) }{2\sqrt{\omega_s}} \\
		& + \big( b_s^2 + c_s^2 \big)  \frac{K \Big(k^2_s- \frac{\omega_s^2}{4} - \omega_s \Big) \, \varphi(d_-) }{8 \sqrt{\omega_s}} + \frac{S_s \, \varphi(d_+)}{2 \sqrt{\omega_s}} \Bigg] \sigma_s^2 \, ds, \qquad a.s.
	\end{split}
	\]
	
	Assuming $C^{BS,\tau_n}$ is a martingale, the drift term in the equation above must be $0$. That is
	\begin{multline*}
		\Bigg[ a_t^{\tau_n} \frac{K \, \varphi(d_-)}{2 \sqrt{\omega_t^{\tau_n}}} + c_t^{\tau_n} \frac{K  \Big(k_t^{\tau_n} + \frac{\omega_t^{\tau_n}}{2}\Big) \, \varphi(d_-) }{2\sqrt{\omega_t^{\tau_n}}} \\
		+ \Big( ({b_t^{\tau_n}})^2 + ({c_t^{\tau_n}})^2 \Big)  \frac{K \Big(({k_t^{\tau_n}})^2 - \frac{({\omega_t^{\tau_n}})^2}{4} - \omega_t^{\tau_n} \Big) \, \varphi(d_-) }{8 \sqrt{\omega_t^{\tau_n}}} + \frac{S_t^{\tau_n} \, \varphi(d_+)}{2 \sqrt{\omega_t^{\tau_n}}} \Bigg] ({\sigma_t^{\tau_n}})^2 = 0, \quad a.s.
	\end{multline*}
	which is equivalent to
	\begin{multline*}
		a_t^{\tau_n} = \Bigg(\frac{({b_t^{\tau_n}})^2 + ({c_t^{\tau_n}})^2}{16}\Bigg) \, ({\omega_t^{\tau_n}})^2 + \Bigg(\frac{({b_t^{\tau_n}})^2 + ({c_t^{\tau_n}})^2 -2 c_t^{\tau_n}}{4} \Bigg) \, \omega_t^{\tau_n} \\
		- \Bigg[\frac{({b_t^{\tau_n}})^2 +( {c_t^{\tau_n}})^2}{4} \, ({k_t^{\tau_n}})^2 + c_t^{\tau_n} \, k_t^{\tau_n} + 1 \Bigg], \qquad a.s.
	\end{multline*}

	Assuming instead that $a^{\tau_n}$ satisfies Eq.~\eqref{eq:nodrift}, we have
	\begin{multline*}
		C^{BS,\tau_n}_t = C_0^{BS} + \int_0^{t \wedge \tau_n} \frac{K b_s \sqrt{\omega_s} \sigma_s \, \varphi(d_-)}{2} \, dZ_s \\
		+ \int_0^{t \wedge \tau_n}\Bigg[ S_s \, \mathcal{N}(d_+) + \frac{ K  c_s \sqrt{\omega_s} \, \varphi(d_-)}{2} \Bigg] \sigma_s \, dW_s, \qquad a.s.
	\end{multline*}
	Three observations follow. First,
	\[
	\int_0^{t \wedge \tau_n} S_s^2  \sigma_s^2  \mathcal{N}^2(d_+) \, ds < \int_0^{t \wedge \tau_n} S_s^2  \sigma_s^2 \, ds = \langle S \rangle_{t \wedge \tau_n}, \qquad a.s.
	\]
	Since $S$ is a continuous martingale, it is almost surely $\int_0^{t \wedge \tau_n} S_s^2 \sigma_s^2 \, ds < \infty$. Therefore, 
	\[
	{S^{\tau_n}}\, {\sigma^{\tau_n}} \, \mathcal{N}(d_+) \in L^2_{loc}\big((0,+\infty)).
	\] 
	Second, 
	\[
	\begin{split}
		\int_0^{t \wedge \tau_n} \frac{ K^2  c_s^2 \omega_s \sigma^2_s \, \varphi^2(d_-)}{4}  \, ds & \le \frac{K^2}{8 \pi}  \int_0^{t \wedge \tau_n} \frac{c_s^2 \omega^2_s \sigma^2_s}{\omega_s} \, ds \le \frac{n K^2}{8 \pi}  \int_0^{t \wedge \tau_n} c_s^2 \omega^2_s \sigma^2_s \, ds < \infty, \quad a.s.
	\end{split}
	\]
	Hence, 
	\[
	\sqrt{\omega^{\tau_n} } \, {c^{\tau_n}} \, {\sigma^{\tau_n}}  \, \varphi(d_-) \in L^2_{loc}\big([0,+\infty)\big).
	\]
	Third, in the same manner it can be proved that 
	\[
	\sqrt{\omega^{\tau_n}} \, {b^{\tau_n}} \, {\sigma^{\tau_n}} \,  \varphi(d_-) \in L^2_{loc}\big([0,+\infty)\big).
	\]
	Therefore, $C^{BS, \tau_n}$ is a local martingale because it is equal to the sum of a constant and three local martingales. By Lemma~\ref{lemma:locmart}, $C^{BS, \tau_n}$ is also a martingale.
\end{proof}

Within the framework outlined in Theorem~\ref{thr:masterSDE}, we give the following definitions.

\begin{definition}
	We say that $\omega$ follows a \emph{locally consistent dynamics} if, for any $n \ge 1$, there exist $a$, $b$ and $c$ such that it satisfies the dynamics in Eq.~\eqref{eq:SDE} and the no-drift condition in Eq.~\eqref{eq:nodrift} simultaneously.
\end{definition}

\begin{definition}
	The process $\omega$ is said to be the market \emph{IRV process} if, for any $n \ge 1$, the call price process stopped at $\tau_n$, denoted as $C^{\tau_n}$, and the process $C^{BS,\tau_n}$ defined in Eq.~\eqref{eq:C} are such that $\mathbb{Q}\big\{ C^{\tau_n}_t = C^{BS,\tau_n}_t; \, t \in [0,T) \big\}$.
\end{definition}

Assuming $\omega$ is the market IRV process, the stopped call price is a martingale if and only if $\omega$ follows a locally consistent dynamics. In particular, it means that, for any $n \ge 1$, there must be $a$, $b$ and $c$ such that $\omega$ is a solution of the following stochastic differential equation:
\begin{equation}\label{eq:masterSDE}
	\begin{split}
		\omega_t^{\tau_n } = \omega_0 & + \int_0^{t\wedge \tau_n} b_s \omega_s \sigma_s \, dZ_s + \int_0^{t \wedge \tau_n} c_s \omega_s \sigma_s \, dW_s \\
		& + \int_0^{t \wedge \tau_n} \bigg\{\bigg(\frac{b^2_s + c^2_s}{16}\bigg) \omega_s^2 + \bigg(\frac{b^2_s + c^2_s - 2 c_s}{4}\bigg) \omega_s - \bigg[\frac{b^2_s + c^2_s}{4} k^2_s + c_s k_s + 1\bigg]\bigg\} \sigma_s^2 \, ds, \quad a.s.
	\end{split}
\end{equation}
for any $t \in [0,+\infty)$. Eq.~\eqref{eq:masterSDE} is the combination of Eq.~\eqref{eq:SDE} and~\eqref{eq:nodrift} and we call it the \emph{IRV master SDE}.

We note that the no-drift condition in Eq.~\eqref{eq:nodrift} has the same advantages as the one derived by~\textcite{CarrSun:2014}, but now the IRV dynamics can accommodate more general drift and diffusion coefficients, because $a$, $b$, $c$ are processes, possibly dependent on $\omega$, $\sigma$, $T$, $K$ or further random factors.

Another critical novelty is the idea of localizing in time the IRV and call price processes. It is shared with~\textcite{ElAmraniJacquierMartini:2020}, and it allows to identify no-arbitrage conditions and embed smile bubbles into the new framework.

In this regard, if the sequences of stopping times in Eq.~\eqref{eq:stopping} almost surely converge, we adopt the notation: 
\begin{equation*}
	\tau^0 \coloneqq \lim_{n \rightarrow + \infty} \tau_n^0, \qquad \tau^{\infty} \coloneqq \lim_{n \rightarrow + \infty} \tau_n^{\infty}, \qquad \tau \coloneqq \tau^0 \wedge \tau^{\infty}.
\end{equation*}

Then:

\begin{proposition}\label{prop:noa1}
	Let $\omega$ be the market IRV process. If $\omega$ follows a locally consistent dynamics such that, almost surely:
	\begin{itemize}
		\item $\tau < T$, then the market is arbitrage-free on $[0,\tau)$; 
		\item $\tau = \tau^0 = T$, then the market is arbitrage-free and $\omega$ is said to follow a \emph{globally consistent dynamics};
		\item $\tau = \tau^{\infty} = T$ or $\tau > T$, then we almost surely observe a jump in the call price at time $T$.
	\end{itemize}
\end{proposition}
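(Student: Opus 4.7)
The overall strategy is to invoke Theorem~\ref{thr:masterSDE}, which under the locally consistent dynamics hypothesis yields that every stopped process $C^{BS,\tau_n}$ is a $\mathbb{Q}$-martingale, and then to pass to the limit $n\to\infty$ in each of the three cases. The cases are distinguished by the boundary behavior of $BS(k,\sqrt{\omega})$ as $\omega$ approaches $0$ or $+\infty$: the identities $BS(k,0)=(1-e^k)_+$ and $BS(k,+\infty)=1$, combined with the uniform bound $0\le BS\le 1$ (hence $0\le C^{BS}_t\le S_t$), are the decisive ingredients.

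In the case $\tau<T$, a standard localization argument---based on the domination $|C^{BS,\tau_n}|\le S$ and the uniform integrability of $S$ on compact time intervals---shows that $C^{BS}$ is a $\mathbb{Q}$-martingale on the stochastic interval $[0,\tau)$. By Put-Call-Parity (Assumption~\ref{ass:PCP}), $P^{BS}$ inherits the same property. Since all traded asset prices are thus $\mathbb{Q}$-martingales on $[0,\tau)$ under a measure equivalent to the physical one, no dynamic arbitrage can exist before $\tau$, and no question of static arbitrage at $T$ arises because the market is only considered on $[0,\tau)$; this is exactly the smile-bubble regime of~\textcite{ElAmraniJacquierMartini:2020}.

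In the case $\tau=\tau^0=T$, the same limit argument gives that $C^{BS}$ is a $\mathbb{Q}$-martingale on $[0,T)$. The additional ingredient is that continuity of $\omega$ forces $\omega_T=0$ a.s., so
\[
C^{BS}_T = S_T\cdot BS(k_T,0) = S_T(1-e^{k_T})_+ = (S_T-K)_+,
\]
which matches the terminal payoff. Hence $C^{BS}$ extends to a $\mathbb{Q}$-martingale on $[0,T]$ with the right boundary value, ruling out both dynamic and static arbitrage. In the remaining case, I would show instead that $C^{BS}_{T^-}>(S_T-K)_+$ a.s., which forces an a.s.\ jump at $T$: if $\tau=\tau^\infty=T$, continuity together with $\omega_{\tau^\infty_n}=n$ gives $\omega_{T^-}=+\infty$, whence $C^{BS}_{T^-}=S_T$; if instead $\tau>T$, then $\omega_T\in(0,+\infty)$ a.s., and since the Black--Scholes call strictly exceeds its intrinsic value for any $v>0$, we still have $C^{BS}_T>(S_T-K)_+$.

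The principal technical obstacle lies in the limit argument used in the first two cases. Passing from martingality of the stopped processes $C^{BS,\tau_n}$ to martingality of $C^{BS}$ on the stochastic interval $[0,\tau)$ requires care when $\tau$ is itself a nontrivial random time, but the uniform bound by $S$ and the fact that $S$ is a true martingale reduce the matter to standard localization/dominated-convergence machinery. A secondary subtlety is making precise the meaning of \emph{arbitrage-free on $[0,\tau)$} used in the smile-bubble case, which I would formalize as: no admissible self-financing strategy stopped at any $\sigma<\tau$ a.s.\ can produce a non-zero, non-negative terminal wealth.
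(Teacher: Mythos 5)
Your proof is correct and follows essentially the same three-case structure as the paper, but the route differs in two places worth noting.

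For the second case ($\tau=\tau^0=T$), the paper does not work directly with $C^{BS}$. It instead invokes Put-Call-Parity to pass to the put price $P^{\tau_n} = S^{\tau_n}\cdot BS_P(k^{\tau_n},\sqrt{\omega^{\tau_n}})$, which is a continuous martingale uniformly bounded by the \emph{constant} $K$. This makes the Conditional Dominated Convergence Theorem immediate, yielding $P_t = \mathbb{E}^{\mathbb{Q}}[(K-S_T)_+\,|\,\mathcal{F}_t]$, and then Put-Call-Parity transfers the conclusion to $C$. Your approach dominates by $S$ instead; this also works, but it requires the extra observation that $\{S_t : t\in[0,T]\}$ is uniformly integrable because $S$ is a martingale on $[0,T]$, whereas the paper's bound by the deterministic constant $K$ sidesteps any UI discussion. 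The paper's choice is the cleaner of the two, though both are valid.

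For the first case ($\tau < T$), your sentence ``no question of static arbitrage at $T$ arises because the market is only considered on $[0,\tau)$'' is imprecise. Static arbitrage in the Carr--Madan sense concerns price constraints at \emph{each} time $t$, not at $T$. The paper's argument explicitly records \emph{both} ingredients: martingality of each $C^{\tau_n}$, \emph{and} the bound $C_t^{\tau_n}\in\bigl((S_t^{\tau_n}-K)_+,S_t^{\tau_n}\bigr)$ which follows automatically from $\omega_t\in(0,\infty)$ before $\tau$. For a single option, that bound is precisely condition~1 of Definition~\ref{def:nostaticarb} (the only relevant static-arbitrage constraint); it is not a vacuous matter dispatched by restricting to $[0,\tau)$. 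Your conclusion is right, but the reason given would not survive the transition to the multi-strike setting of Proposition~\ref{prop:noa2}, where absence of static arbitrage must be imposed explicitly. The third case matches the paper's reasoning exactly.
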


\begin{proof}
	Denote the call price process with $C$. Being $\omega$ the market IRV process, we have 
	\[
	C_t^{\tau_n} = S_t^{\tau_n} \cdot BS\Big(k_t^{\tau_n},\sqrt{\omega_t^{\tau_n}}\Big) \quad \Longrightarrow \quad C_t^{\tau_n} \in \Big(\big(S_t^{\tau_n} - K\big)_+, S_t^{\tau_n}\Big), \qquad  a.s.
	\]
	for any $n \ge 1$ and any $t \in [0,T)$. Moreover, the local consistency of the IRV dynamics guarantees that $C^{\tau_n}$ is a martingale, for any $n \ge 1$. If almost surely $\tau < T$, these two conditions ensure that the market is arbitrage-free on $[0,\tau)$, although the call price may not be equal to the conditional expectation of the final payoff under the risk-neutral probability.
	
	Denote the corresponding put price process with $P$. 
	As before, we have
	\[
	P_t^{\tau_n} = S_t^{\tau_n} \cdot {BS}_P\Big(k_t^{\tau_n},\sqrt{\omega_t^{\tau_n}}\Big) \quad \Longrightarrow \quad P_t^{\tau_n} \in \Big(\big(K - S_t^{\tau_n}\big)_+, K \Big), \qquad a.s.
	\]
	and now $P^{\tau_n}$ is a continuous uniformly bounded (by $K$) martingale, for any $n \ge 1$. Therefore, the Conditional Dominated Convergence Theorem applies and, if almost surely $\tau = \tau^0 = T$, we have
	\[
	\begin{split}
		P_t & = \lim_{n \rightarrow +\infty} P_t^{\tau_n} \\
		& = \lim_{n \rightarrow +\infty} \mathbb{E}^{\mathbb{Q}} \big[ P_{\tau_n}| \mathcal{F}_t\big] \\
		& = \lim_{n \rightarrow +\infty} \mathbb{E}^{\mathbb{Q}} \big[ S_{\tau_n} \cdot {BS}_P\big(k_{\tau_n},\sqrt{\omega_{\tau_n}}\big) \big| \mathcal{F}_t\big] \\
		& = \mathbb{E}^{\mathbb{Q}} \big[ S_{T} \cdot {BS}_P(k_{T},0) | \mathcal{F}_t\big] \\
		& =	\mathbb{E}^{\mathbb{Q}} \big[(K - S_{T} )_+| \mathcal{F}_t\big]
	\end{split}
	\]
	that is, by the Put-Call-Parity,
	\[
	C_t =	\mathbb{E}^{\mathbb{Q}} \big[(S_{T} - K)_+|\mathcal{F}_t\big], \qquad a.s.
	\]
	for any $t \in [0,T)$. In other words, the call price is equal to the conditional expectation of the final payoff under the risk-neutral probability, and thus the market is arbitrage-free.
	
	Finally, if almost surely $\tau = \tau^{\infty} = T$ or $\tau > T$, it is
	\[
	\lim_{t \rightarrow T^-} C_t = \lim_{t \rightarrow T^-} S_t \cdot BS\Big(k_t,\sqrt{\omega_t}\Big) = S_T \cdot BS\Big(k_T,\sqrt{\omega_T}\Big) > \big(S_T - K\big)_+, \qquad a.s.
	\]
	while
	\[
	C_T \equiv \big(S_T - K\big)_+, \qquad a.s.
	\]
	Hence, $C$ is a càdlàg process and we observe a jump at $T$ almost surely.
\end{proof}

Consistently with~\textcite{ElAmraniJacquierMartini:2020}, we set:
\begin{definition}
	A \emph{smile bubble} is an IRV process which follows a locally (but not globally) consistent dynamics.
\end{definition}

Let $\omega$ be an IRV process following a locally consistent dynamics such that $\tau < T$, almost surely. Proposition~\ref{prop:noa1} ensures that the market is arbitrage-free on $[0,\tau)$, meaning that no arbitrage opportunities exist as long as trades occur within the the bubble lifespan, with possibly additionally unwinding trades at the expiry $T$. As shown in the proof above, absence of arbitrage holds even though the call price may not be equal to the conditional expectation of the final payoff under the risk-neutral probability. At $\tau$, the IRV process either vanishes or explodes while the smile bubble blows up, giving rise to a static arbitrage opportunity and a change of regime. In particular, if $\tau = \tau^0$ and $S_{\tau} \leq K$, then $C_{\tau} = (S_{\tau} - K)_+ = 0$; instead, if $S_{\tau} > K$ and an investor opens a long position on the call option, a short position on the underlying asset and deposits \$$K$ on the checking account, then it is a static arbitrage opportunity. On the other hand,  if $\tau = \tau^{\infty}$, then $C_{\tau} = S_{\tau}$ and a static arbitrage opportunity is locked opening a short position on the call option and a long position on the underlying asset.

\subsection{Market with an IRV smile}

We now extend previous findings to a market with an IRV smile, that is a market on which are traded European call options with different strike prices $\{K_i\}_{i \in \mathcal{I}}$, but same maturity and underlying. We start by assuming $\mathcal{I} = \{1,\dots,N\}$ and $0 < K_1 < \cdots < K_N < +\infty$. To simplify the notation, we replace the argument $(K_i)$ with the subscript $i$, e.g., $C_{i} \coloneqq C(K_i)$.

By~\textcite{CarrMadan:2005}, we say that:
\begin{definition}\label{def:nostaticarb}
	The market is free of \emph{static arbitrage} at time $t \in [0,T)$ if, for any $i \in \mathcal{I}$, all the following conditions hold:
	\begin{enumerate}
		\item $\big(S_t - K_i \big)_+ < C_{i,t} < S_t \,$;
		\item $-1 < \frac{C_{i+1,t} - C_{i,t}}{K_{i+1}-K_i} < 0 \,$;
		\item $\frac{C_{i+1,t} - C_{i,t}}{K_{i+1} - K_i} \le \frac{C_{i+2,t} - C_{i+1,t}}{K_{i+2} - K_{i+1}} \,$;
		\item $-1 < \frac{C_{1,t} - S_t}{K_1} \le \frac{C_{2,t} - C_{1,t}}{K_{2} - K_1} < 0\,$.
	\end{enumerate}
\end{definition}

We recall that the concept of absence of static arbitrage is somewhat limited because it is equivalent to the impossibility of locking an arbitrage through a buy-and-hold position. However, it does not deal with the evolution in time of prices. 


Previous definitions adapt to the broadened environment as follows:

\begin{definition}
	On the filtered probability space $\big(\Omega,\mathcal{F},(\mathcal{F}_t)_{t \ge 0},\mathbb{Q}\big)$, let $\mathbi{w} = (\omega_{1},\dots,\omega_{N})$ be an $N$-dimensional process. Assume that each component follows a locally consistent dynamics. In particular, assume that, for any $i \in \mathcal{I}$ and $n \ge 1$, there exist progressively measurable processes $a_{i}$, $b_{i}$, $c_{i}$ such that
	\begin{equation*}
		\omega_{i,t}^{\tau_{i,n}} = \omega_{i,0} + \int_0^{t \wedge \tau_{i,n}} a_{i,s} \sigma_s^2 \, ds + \int_0^{t\wedge \tau_{i,n}} b_{i,s} \omega_{i,s} \sigma_s \, dZ_{i,s} + \int_0^{t \wedge \tau_{i,n}} c_{i,s} \omega_{i,s} \sigma_s \, dW_s, \qquad a.s.
	\end{equation*}
	where $a_{i}$ satisfies the no-drift condition of Eq.~\eqref{eq:nodrift} and $\{Z_{i}\}_{i \in \mathcal{I}}$ is a collection of standard Brownian motions independent of $W$. If the market $\big(S_0, \{{C}_{i,0}^{BS}\}_{i \in \mathcal{I}}\big)$ is free of static arbitrage, we say that $\mathbi{w}$ follows a \emph{locally consistent dynamics} and we write, for any $n \ge 1$,
	\[
	\mathbi{w}_t^{\tau_n} = \mathbi{w}_0 + \int_0^{t \wedge \tau_n} \mathbf{a}_s \sigma^2_s \, ds + \int_0^{t \wedge \tau_n} \sum_{i \in \mathcal{I}} b_{i,s} \omega_{i,s} \mathbf{e}_i \mathbf{e}_i' \, d\mathbi{Z}_s + \int_0^{t \wedge \tau_n} \sum_{i \in \mathcal{I}} c_{i,s} \omega_{i,s} \mathbf{e}_i \, dW_s, \qquad a.s. 
	\]
	where $\mathbf{a} = (a_{1},\dots,a_{N})$, $\mathbi{Z} = (Z_{1},\dots,Z_{N})$, $\{\mathbf{e}_i\}_{i \in \mathcal{I}}$ is the collection of unit vectors, and
	\begin{equation}\label{eq:taunsmile}
		\tau_n^0 \coloneqq \inf_{i \in \mathcal{I}} \big\{\tau_{i,n}^0\big\}, \qquad \tau_n^{\infty} \coloneqq \inf_{i \in \mathcal{I}} \big\{\tau_{i,n}^{\infty}\big\}, \qquad \tau_n \coloneqq \tau_n^0 \wedge \tau_n^{\infty}.
	\end{equation}
\end{definition}

\begin{definition}
	Let $\mathbi{w}$ follow a locally consistent dynamics. If each component $\{\omega_{i}\}_{i \in \mathcal{I}}$ follows a globally consistent dynamics, we say that $\mathbi{w}$ follows a \emph{globally consistent dynamics}.
\end{definition}

\begin{definition}
	We say that $\mathbi{w}$ is the market \emph{IRV smile process} if each component $\{\omega_{i}\}_{i \in \mathcal{I}}$ is the IRV process corresponding to the call with strike price $K_i$, i.e., $\mathbb{Q}\big\{ C^{\tau_{i,n}}_{i,t} = C^{BS,\tau_{i,n}}_{i,t}; \, t \in [0,T) \big\}$ for any $i \in \mathcal{I}$ and $n \ge 1$.
\end{definition}

Assuming the sequences of stopping times in Eq.~\eqref{eq:taunsmile} almost surely converge, we set
\begin{equation} \label{eq:tausmile}
	\tau^0 \coloneqq \lim_{n \rightarrow + \infty} \tau_{n}^0, \qquad \tau^{\infty} \coloneqq \lim_{n \rightarrow + \infty} \tau_{n}^{\infty}, \qquad \tau \coloneqq \tau^{\infty} \wedge \tau^0.
\end{equation}
Therefore, Proposition~\ref{prop:noa1} can be adapted as follows: 


\begin{proposition}\label{prop:noa2}
	If there exists an IRV smile process $\mathbi{w}$ following:
	\begin{itemize}
		\item a locally consistent dynamics such that $\tau < T$, almost surely, and the market is free of static arbitrage for any $t \in [0,\tau)$, then the market is arbitrage-free on $[0,\tau)$
		\item a globally consistent dynamics, then the market is arbitrage-free;
		\item a locally consistent dynamics such that $\tau = \tau^{\infty} = T$ or $\tau > T$, almost surely, then we observe a jump in at least one call price at time $T$.
	\end{itemize}
\end{proposition}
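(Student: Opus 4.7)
The plan is to reduce Proposition~\ref{prop:noa2} to a component-wise application of Proposition~\ref{prop:noa1}, exploiting the finiteness of $\mathcal{I}$ to write $\tau_n^0 = \min_{i \in \mathcal{I}} \tau_{i,n}^0$ and $\tau_n^{\infty} = \min_{i \in \mathcal{I}} \tau_{i,n}^{\infty}$, so that $\tau_n \le \tau_{i,n}$ for every $i$. This pathwise ordering is what lets the single-option martingality results transfer to the smile framework via optional stopping, and the same finiteness ensures that the limit stopping times satisfy $\tau^0 = \min_i \tau_i^0$, $\tau^{\infty} = \min_i \tau_i^{\infty}$ and $\tau = \min_i \tau_i$ almost surely.

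For the first bullet, I would apply Theorem~\ref{thr:masterSDE} to each $\omega_{i}$: since each component follows a locally consistent dynamics, $C_{i}^{BS, \tau_{i,n}}$ is a $\mathbb{Q}$-martingale, and further stopping at $\tau_n \le \tau_{i,n}$ preserves the martingale property. Because $\mathbi{w}$ is the market IRV smile process, $C_{i}^{\tau_n} = C_{i}^{BS, \tau_n}$ almost surely on $[0,T)$, so every stopped call price is itself a $\mathbb{Q}$-martingale up to $\tau_n$, and hence up to $\tau$ in the limit. Dynamic arbitrage among $S, C_1, \ldots, C_N$ on $[0,\tau)$ is therefore ruled out. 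The assumed conditions of Definition~\ref{def:nostaticarb} on $[0,\tau)$ additionally preclude buy-and-hold arbitrages across strikes (payoff dominance, vertical spreads, convexity), and combining the two delivers absence of arbitrage on $[0,\tau)$.

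The second bullet follows from applying Proposition~\ref{prop:noa1} componentwise: each globally consistent $\omega_i$ yields $C_{i,t} = \mathbb{E}^{\mathbb{Q}}[(S_T - K_i)_+ \mid \mathcal{F}_t]$ almost surely for any $t \in [0,T)$, so that $\mathbb{Q}$ is a common martingale measure for the whole traded universe $\{S, C_1, \ldots, C_N\}$, implying both the static conditions of Definition~\ref{def:nostaticarb} and absence of any dynamic arbitrage. For the third bullet, I would argue by cases: if $\tau > T$ almost surely, then $\tau_i > T$ for every $i$ and Proposition~\ref{prop:noa1}'s third bullet forces a jump at $T$ in each $C_i$; if instead $\tau = \tau^{\infty} = T$ almost surely, then $\tau^0 \geq T$ and the minimum $\tau^{\infty} = \min_i \tau_i^{\infty}$ is attained by some $i^{\ast} \in \mathcal{I}$, giving $\tau_{i^{\ast}}^{\infty} = T \le \tau_{i^{\ast}}^0$, i.e., $\tau_{i^{\ast}} = \tau_{i^{\ast}}^{\infty} = T$, to which the same result applies and yields a jump in $C_{i^{\ast}}$ at $T$.

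The main obstacle I anticipate lies in the first bullet: individual martingality of each $C_i^{\tau_n}$ only precludes dynamic arbitrage in each single-option subsystem, so in the combined multi-strike market one must explicitly invoke Definition~\ref{def:nostaticarb} to rule out cross-strike buy-and-hold arbitrages. This step is unavoidable because locally (but not globally) consistent call prices are generally not conditional expectations of the terminal payoffs, and hence monotonicity and convexity in strike are not automatic consequences of per-option martingality.
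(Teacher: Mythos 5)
Your argument is correct and follows essentially the same route as the paper's proof: reduce componentwise to Proposition~\ref{prop:noa1} and Theorem~\ref{thr:masterSDE}, use $\tau \le \tau_i$ for each $i$ to transfer martingality of the stopped call prices, invoke Definition~\ref{def:nostaticarb} explicitly for the locally consistent case, and for the jump case identify a strike index whose IRV does not vanish at $T$. Your treatment of the third bullet is slightly more explicit than the paper's (which simply asserts existence of some $i$ with $\omega_{i,T} \neq 0$), but the content is identical.
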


Before giving the proof, we note that when options with different strike prices are traded, the condition that the IRV smile follows a locally consistent dynamics is no longer sufficient to ensure absence of arbitrage, but we have to explicitly assume absence of static arbitrage. However, if the IRV smile dynamics is globally consistent, then the market is arbitrage-free, as in the single option case.

\begin{proof}
	If $\mathbi{w}$ follows a globally consistent dynamics, then each IRV process $\{\omega_{i}\}_{i \in \mathcal{I}}$ does the same. In particular, $\tau_i = \tau^0_i = T$, a.s., for any $i \in \mathcal{I}$. Therefore, from the proof of Proposition~\ref{prop:noa1}, we know that
	\[
	C_{i,t} = \mathbb{E}^{\mathbb{Q}}\big[(S_T-K_i)_+ | \mathcal{F}_t\big], \qquad a.s.
	\]
	for any $i \in \mathcal{I}$ and $t \in [0,T)$. Hence, the market is arbitrage-free.
	
	If instead $\mathbi{w}$ follows a locally consistent dynamics with $\tau < T$, it is almost surely $\tau \le \tau_i$, for any $i \in \mathcal{I}$. Then, for any $i \in \mathcal{I}$ and $t \in [0,+\infty)$, 
	\[
	C_{i,t}^{\tau} = S_t^{\tau} \cdot BS\Big(k_{i,t}^{\tau}, \sqrt{\omega_{i,t}^{\tau}}\Big), \qquad a.s.
	\]
	is a martingale and, for any $t \in [0,\tau)$, it satisfies condition 1 of Definition~\ref{def:nostaticarb}. Assuming conditions 2--4 also hold, the market is arbitrage-free on $[0,\tau)$, although option prices may not be equal to the expectation of the final payoffs under the risk-neutral probability.
	
	Finally, if $\tau = \tau^{\infty} = T$ or $\tau > T$, almost surely, then there exists at least one $i \in \mathcal{I}$ such that $\omega_{i,T} \neq 0$, so that
	\[
	\lim_{t \rightarrow T^-} C_{i,t} = \lim_{t \rightarrow T^-} S_t \cdot BS\big(k_{i,t}, \sqrt{\omega_{i,t}}\big) = S_T \cdot BS\Big(k_{i,T},\sqrt{\omega_{i,T}}\Big) > \big(S_T - K_i\big)_+, \qquad a.s.
	\]
	while
	\[
	C_{i,T} \equiv \big(S_T - K_i\big)_+, \qquad a.s.
	\]
	Hence, $C_i$ is a càdlàg process and we observe a jump at $T$ almost surely.
\end{proof}

From the proof, it immediately follows:
\begin{corollary}\label{cor:noa}
	If each IRV process $\{\omega_i\}_{i \in \mathcal{I}}$ follows a globally consistent dynamics, then the market is arbitrage-free.
\end{corollary}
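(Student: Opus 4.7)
The plan is to reduce the claim to the single-strike globally consistent case, strike by strike, and then observe that once every call admits the canonical conditional-expectation representation, the static arbitrage conditions of Definition~\ref{def:nostaticarb} come for free, so no separate verification is needed.

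First I would apply, for each $i \in \mathcal{I}$ independently, the single-strike argument already performed in the proof of Proposition~\ref{prop:noa1}. By the hypothesis of global consistency, $\tau_i = \tau_i^0 = T$ almost surely, and Put--Call Parity ensures that the stopped put process $P_i^{\tau_{i,n}}$ is a continuous martingale uniformly bounded by $K_i$. Conditional Dominated Convergence then yields
\[
C_{i,t} = \mathbb{E}^{\mathbb{Q}}\!\left[(S_T - K_i)_+ \,\big|\, \mathcal{F}_t\right], \qquad a.s.
\]
for every $t \in [0,T)$ and every $i \in \mathcal{I}$. This is exactly the manipulation written in the first case of the proof of Proposition~\ref{prop:noa2}, applied coordinate-wise and without any joint hypothesis on $\mathbi{w}$.

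Next I would observe that dynamic arbitrage is then ruled out because every traded price, namely $S$ and each $C_i$, is a $\mathbb{Q}$-martingale. For static arbitrage I would rewrite each quantity appearing in Definition~\ref{def:nostaticarb} as a conditional expectation of a payoff written on the common random variable $S_T$. The bounds in condition 1 follow from $\mathbb{E}^{\mathbb{Q}}[S_T - K_i \mid \mathcal{F}_t] = S_t - K_i$ and from $(S_T - K_i)_+ \le S_T$; the monotonicity, convexity and slope properties of $K \mapsto (s-K)_+$ transfer through the conditional expectation to $K \mapsto C_{\cdot,t}(K)$, giving conditions 2, 3 and 4.

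The only real obstacle, and it is a minor one, is turning the inequalities obtained in this way into the \emph{strict} inequalities that Definition~\ref{def:nostaticarb} requires. This amounts to checking that $\mathbb{Q}(S_T > K_i \mid \mathcal{F}_t) \in (0,1)$ at every intermediate time $t \in [0,T)$, which holds because $S$ is a strictly positive $\mathbb{Q}$-martingale started from $S_0 > 0$ with non-degenerate terminal distribution (guaranteed by the almost sure positivity of $\sigma$ on $[t,T]$). Once this is in place the corollary follows immediately: the substantive work, the passage to the limit $n \to \infty$ in the localized dynamics, has already been done inside Proposition~\ref{prop:noa1}, and the corollary simply records that component-wise global consistency alone is strong enough to dispense with the separately postulated static arbitrage hypothesis of Proposition~\ref{prop:noa2}.
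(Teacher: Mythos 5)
Your proposal is correct and follows essentially the same route as the paper: apply the single-strike argument from Proposition~\ref{prop:noa1} component-wise to obtain $C_{i,t} = \mathbb{E}^{\mathbb{Q}}\big[(S_T - K_i)_+ \mid \mathcal{F}_t\big]$ for every $i$, and then conclude that the market is arbitrage-free. The paper leaves the static no-arbitrage step implicit after the conditional-expectation representation, whereas you usefully spell out that monotonicity, convexity and slope of $K \mapsto (s-K)_+$ transfer through conditional expectation and that the strict inequalities in Definition~\ref{def:nostaticarb} rest on the non-degeneracy of the conditional law of $S_T$ given $\mathcal{F}_t$, which is the right (if minor) point to flag.
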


Moreover, the proof stresses the criticality of the condition $\tau = \tau^0 = T$, almost surely. It distinguishes globally from locally consistent dynamics and ensures that call prices are equal to the conditional expectation of ﬁnal payoﬀs under the risk-neutral probability. This makes call prices consistent with the traditional risk-neutral valuation formula and thus with absence of arbitrage, both static and dynamic. This is not the case when the IRV smile dynamics is just locally consistent. Indeed, locally consistency ensures that stopped call prices are martingales, which excludes dynamic arbitrage until $\tau$, and satisfy condition 1 for absence of static arbitrage. While these two conditions are sufficient for a single option market to be arbitrage-free (until $\tau$), now we have to explicitly request that also conditions 2--4 of Definition~\ref{def:nostaticarb} hold. All this discussion is absent in~\textcite{CarrSun:2014} and represents the~\textcite{ElAmraniJacquierMartini:2020}'s transposition into the new framework.

All the findings of this section can be easily extended to a countable set of strike prices adopting the following definitions:
\begin{definition}
	Given a countable set $\mathcal{I}$  and a collection of strike prices $\{K_i\}_{i \in \mathcal{I}}$ such that $0 < K_i < K_{i+1} < +\infty$ for any $i  \in \mathcal{I}$, $\mathbi{w} = (\omega_{i})_{i \in \mathcal{I}}$ is said to follow a \emph{locally} [resp., \emph{globally}] \emph{consistent dynamics} if, for any finite subset $\iota \subseteq \mathcal{I}$, the dynamics of $\mathbi{w}^{\iota} = (\omega_{i})_{i \in \iota}$ is locally [resp., globally] consistent. Furthermore, $\mathbi{w}$ is the market \emph{IRV smile process} if, for any $\iota$, $\mathbi{w}^{\iota} = (\omega_{i})_{i \in \iota}$ is the IRV smile process of the restricted market.
\end{definition}

Finally, we note that the proof of Proposition~\ref{prop:noa2} applies without modifications to a market with a continuum of strike prices, so that:
\begin{corollary}\label{cor:noa2}
	Consider a market with a continuum of strike prices. For any $K \in (0,+\infty)$, let $\omega(K)$ be the corresponding market IRV process and $\tau^0$, $\tau^{\infty}$ and $\tau$ be defined as in Eq.~\eqref{eq:tausmile}. If each $\omega^{\tau}(K)$ follows:
	\begin{itemize}
		\item a locally consistent dynamics, $\tau < T$, a.s., and the market is free of static arbitrage for any $t \in [0,\tau)$, then the market is arbitrage-free on $[0,\tau)$; 
		\item a globally consistent dynamics, then the market is arbitrage-free;
		\item a locally consistent dynamics and $\tau = \tau^{\infty} = T$ or $\tau > T$, a.s., then we observe a jump in at least one call price at time $T$.
	\end{itemize}
\end{corollary}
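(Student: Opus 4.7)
The plan is to observe that the three-part argument already used in the proof of Proposition~\ref{prop:noa2} operates strike-by-strike wherever it invokes local or global consistency of the IRV dynamics, so it transposes to an uncountable index set without change. The single place where finiteness of $\mathcal{I}$ entered, namely the bookkeeping of the static-arbitrage inequalities 2--4 in Definition~\ref{def:nostaticarb}, is absorbed here into the explicit hypothesis that the market is free of static arbitrage for every $t \in [0,\tau)$.

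First I would dispatch the globally consistent case. For each $K \in (0,+\infty)$ the hypothesis gives $\tau(K) = \tau^0(K) = T$ almost surely, hence Proposition~\ref{prop:noa1} applied at $K$ yields $C_t(K) = \mathbb{E}^{\mathbb{Q}}[(S_T - K)_+ \mid \mathcal{F}_t]$ for every $t \in [0,T)$. Since this holds for every $K$, every call price coincides with the risk-neutral expectation of its terminal payoff, so the market is free of both static and dynamic arbitrage.

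Second, for the locally consistent case with $\tau < T$, I would fix an arbitrary $K$ and note that, by construction, $\tau \le \tau(K)$ almost surely. By Theorem~\ref{thr:masterSDE} the stopped process $C^{\tau}(K) = S^{\tau} \cdot BS\bigl(k^{\tau}(K),\sqrt{\omega^{\tau}(K)}\bigr)$ is then a martingale, and for $t \in [0,\tau)$ the Black-Scholes representation automatically delivers $(S_t - K)_+ < C_t(K) < S_t$, i.e., condition 1 of Definition~\ref{def:nostaticarb}. The additional hypothesis supplies the remaining convexity, monotonicity and slope conditions across the continuum of strikes, so dynamic arbitrage is excluded by martingality and static arbitrage by assumption. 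Third, for the explosion/overshoot case it suffices to produce a single $K$ at which the jump occurs: when $\tau = \tau^{\infty} = T$ or $\tau > T$ almost surely, the definition of $\tau^{\infty}$ yields some $K$ with $\omega_T(K) > 0$, and the terminal computation at the end of the proof of Proposition~\ref{prop:noa2} (applied verbatim at that $K$) gives $\lim_{t \to T^-} C_t(K) > (S_T - K)_+ = C_T(K)$.

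The principal obstacle is foundational rather than computational: one must ensure that $\tau^0$ and $\tau^{\infty}$ remain $\mathcal{F}_t$-stopping times once the infimum in \eqref{eq:taunsmile} is taken over an uncountable set of strikes, which requires some joint measurability (or pathwise continuity) of the map $K \mapsto \omega_\cdot(K)$. Once that regularity is granted, every other step of the proof of Proposition~\ref{prop:noa2} is literally unchanged, which is precisely the meaning of the phrase \emph{applies without modifications} preceding the statement.
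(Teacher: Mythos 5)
Your proposal is correct and takes essentially the same route as the paper, which simply asserts that the proof of Proposition~\ref{prop:noa2} transfers verbatim to the continuum-of-strikes setting; you make this explicit by checking the three cases strike-by-strike. The measurability caveat you raise about $\tau^0$ and $\tau^\infty$ being stopping times when the infimum in Eq.~\eqref{eq:taunsmile} runs over an uncountable index set is a genuine regularity issue that the paper silently assumes away, and flagging it is a worthwhile addition.
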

\section{Examples}
\label{sec:examples}


\subsection{Globally consistent dynamics in a single option market}

The results of~\textcite{SchweizerWissel:2008b} for a single option market can be mapped into the new IRV framework. Indeed, they identified a drift restriction on the implied variance process $\mathcal{X}$, which is nothing else than $\omega / (T-t)$. We omit the argument $(K)$ because fixed. 

According to the new framework, Eq.~(3.4)--(3.5) of~\textcite{SchweizerWissel:2008b} can be rewritten as\footnote{\textcite{SchweizerWissel:2008b} denote with $b$ the market price of risk. Under $\mathbb{Q}$, we have $b \equiv 0$ and $\mu \equiv 0$.}
\begin{equation}\label{eq:SW1}
	\begin{cases}
		dS_t = S_t 
		\begin{bmatrix}
			\sigma_{1,t} & 0
		\end{bmatrix}
		\times 
		\begin{bmatrix}
			dW_t \\
			dZ_t
		\end{bmatrix}, \qquad S_0 > 0 \\
		d\mathcal{X}_t = u_t \mathcal{X}_t \, dt + \mathcal{X}_t
		\begin{bmatrix}
			v_{1,t} & v_{2,t}
		\end{bmatrix}
		\times 
		\begin{bmatrix}
			dW_t \\
			dZ_t
		\end{bmatrix}, \qquad \mathcal{X}_0 > 0
	\end{cases}
\end{equation}
where $W$ and $Z$ are two independent Brownian motions, and the drift restriction (Eq.~(3.7) in their paper) as
\begin{multline}\label{eq:SW2}
	u_t = \frac{1}{T-t} \bigg( 1 - \frac{\sigma_{1,t}^2 + k_t \sigma_{1,t} v_{1,t} + \frac{1}{4} k_t^2 v_{1,t}^2 + \frac{1}{4} k_t^2 v_{2,t}^2}{\mathcal{X}_t} \bigg) \\
	+ \bigg( \frac{(T-t) \mathcal{X}_t}{16} + \frac{1}{4}\bigg) \big( v_{1,t}^2 + v_{2,t}^2\big) - \frac{1}{2}\sigma_{1,t} v_{1,t}
\end{multline}
where $k_t \coloneqq \log(K/S_t)$.

Assuming the processes $\mathbi{v}$ and $\mathbf{\sigma}$ to be functions of $t$, $S$ and $\mathcal{X}$,
\begin{gather*}
	\mathbi{v}_t = \mathbi{v}(t,S_t,\mathcal{X}_t) = 
	\begin{bmatrix}
		v_1(t,S_t,\mathcal{X}_t) \\
		v_2(t,S_t,\mathcal{X}_t)
	\end{bmatrix}, \\
	\mathbf{\sigma}_t = \mathbf{\sigma}(t,S_t,\mathcal{X}_t) = 
	\begin{bmatrix}
		\sigma_1(t,S_t,\mathcal{X}_t) \\
		\sigma_2(t,S_t,\mathcal{X}_t)
	\end{bmatrix} = 
	\begin{bmatrix}
		\sigma_1(t,S_t,\mathcal{X}_t) \\
		0
	\end{bmatrix},
\end{gather*} 
their Proposition 3.3 reads as:
\begin{proposition}\label{prop:SW33}
	Let
		\begin{equation}\label{eq:SW33}
		\begin{bmatrix}
			\sigma_1(t,S_t,\mathcal{X}_t) \\
			0
		\end{bmatrix} \coloneqq - \frac{1}{2} k_t \,
		\begin{bmatrix}
			v_1(t,S_t,\mathcal{X}_t) \\
			v_2(t,S_t,\mathcal{X}_t)
		\end{bmatrix} + \sqrt{\mathcal{X}_t} \cdot \bigg( \begin{bmatrix}
			f_1(t,S_t,\mathcal{X}_t) \\
			f_2(t,S_t,\mathcal{X}_t)
		\end{bmatrix} + (T-t) \begin{bmatrix}
			g_1(t,S_t,\mathcal{X}_t) \\
			g_2(t,S_t,\mathcal{X}_t)
		\end{bmatrix} \bigg)
	\end{equation}
	where $\mathbi{f}$, $\mathbi{g}$, $\mathbi{v}: [0,T] \times (0,\infty)^2 \rightarrow \mathbb{R}^2$ are locally Lipschitz\footnote{A function $\mathbi{f} : \Theta \times U \rightarrow \mathbb{R}^k$ is said \emph{locally Lipschitz} on $U$ if $\mathbi{f}(\cdot,x)$ is bounded for fixed $x \in U$ and if there exists a continuous function $C(\cdot,\cdot)$ on $U^2$ such that
		\[
		\big\lvert \mathbi{f}(\theta,x) - \mathbi{f}(\theta,x')\big\rvert \le C(x,x') \, \lvert x - x' \rvert, \qquad \forall x, x' \in U, \, \theta \in \Theta.
		\]} on $(0,\infty)^2$ and satisfy
	\begin{equation}\label{eq:SWcon}
		\begin{cases}
			f_1^2(t,S_t,\mathcal{X}_t) + f_2^2(t,S_t,\mathcal{X}_t) = 1 \\
			g_1^2(t,S_t,\mathcal{X}_t) + g_2^2(t,S_t,\mathcal{X}_t) \le const \\
			v_1^2(t,S_t,\mathcal{X}_t) + v_2^2(t,S_t,\mathcal{X}_t) \le \frac{const}{(1 + \sqrt{\mathcal{X}_t} + \lvert k_t \rvert)^2}
		\end{cases}
	\end{equation}
	where $const$ denotes a generic positive constant whose value can change from one line to the next. Then the system in Eq.~\eqref{eq:SW1}, where $u$ satisfies Eq.~\eqref{eq:SW2}, has a unique positive (nonexploding) solution $(S,\mathcal{X})$ on $[0,T]$.
\end{proposition}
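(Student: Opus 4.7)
The plan is to reduce the system in Eq.~\eqref{eq:SW1}--\eqref{eq:SW2} to a standard It\^o SDE on $\mathbb{R}^2$ by passing to logarithmic coordinates $L_t \coloneqq \log S_t$ and $Y_t \coloneqq \log \mathcal{X}_t$, so that positivity of $S$ and $\mathcal{X}$ becomes automatic and the only tasks left are local existence, uniqueness, and non-explosion of the transformed process on $[0,T]$. Note that $k_t = \log K - L_t$ is then affine in the state, which simplifies the regularity checks below.

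The first technical step is to substitute the ansatz Eq.~\eqref{eq:SW33} into the drift restriction Eq.~\eqref{eq:SW2} and verify that the two apparently singular pieces---the explicit factor $1/(T-t)$ and the inner $1/\mathcal{X}_t$---cancel identically. This cancellation is precisely what the decomposition $-\tfrac{1}{2} k_t \mathbi{v} + \sqrt{\mathcal{X}_t}\,(\mathbi{f} + (T-t)\mathbi{g})$ was designed for: expanding $\sigma_{1,t}^2 + k_t \sigma_{1,t} v_{1,t} + \tfrac{1}{4} k_t^2 (v_{1,t}^2 + v_{2,t}^2)$ and using $f_1^2 + f_2^2 = 1$, the troublesome $\mathcal{X}_t$ in the denominator divides out cleanly and the remaining $1/(T-t)$ multiplies an explicit $O(T-t)$ factor. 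After this algebraic reduction, $u$ becomes a smooth function of $(t, k_t, \mathcal{X}_t)$ built from $\mathbi{f}, \mathbi{g}, \mathbi{v}$, so the coefficients of the transformed SDE for $(L, Y)$ are locally Lipschitz in the state and continuous in $t$ on $[0,T] \times \mathbb{R}^2$. Classical strong-existence/uniqueness theorems for It\^o SDEs then provide a unique strong solution up to some explosion time $\zeta$.

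The main obstacle is the non-explosion statement $\mathbb{Q}(\zeta > T) = 1$, and this is where the three growth conditions in Eq.~\eqref{eq:SWcon} actually come into play. The bound on $v_1^2 + v_2^2$ by $const/(1 + \sqrt{\mathcal{X}_t} + |k_t|)^2$ is crucial because it damps the diffusion coefficient of $Y$ exactly in those regimes---$\mathcal{X}_t$ large, or $S_t$ far from $K$---where the state could otherwise blow up; meanwhile $f_1^2 + f_2^2 = 1$ and $g_1^2 + g_2^2 \le const$ keep the drift under control. My plan is to close the argument via Khasminskii's non-explosion test: I introduce the Lyapunov function $V(s,x) \coloneqq (\log s)^2 + (\log x)^2$ and, via It\^o's formula combined with Eq.~\eqref{eq:SWcon}, show that $\mathcal{L} V \le C \, (1 + V)$ uniformly in $t \in [0,T]$. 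Along the localizing sequence $\zeta_n \coloneqq \inf\{t : \lvert L_t \rvert + \lvert Y_t \rvert \ge n\}$, a Gronwall argument then forces $\zeta_n \nearrow \zeta \ge T$ almost surely, yielding the desired global existence on $[0,T]$.
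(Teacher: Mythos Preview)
The paper does not actually prove this proposition: it is presented there as a direct restatement, in the paper's notation, of Proposition~3.3 of \textcite{SchweizerWissel:2008b}, with no independent argument given. So there is no ``paper's own proof'' to compare against; your proposal goes well beyond what the paper does by sketching a self-contained argument.

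Your outline is essentially the right one and mirrors the strategy in \textcite{SchweizerWissel:2008b}: the completion-of-squares identity
\[
\sigma_{1}^{2}+k\,\sigma_{1}v_{1}+\tfrac{1}{4}k^{2}\bigl(v_{1}^{2}+v_{2}^{2}\bigr)
=\bigl\lvert \boldsymbol{\sigma}+\tfrac{1}{2}k\,\mathbi{v}\bigr\rvert^{2}
=\mathcal{X}\,\bigl\lvert \mathbi{f}+(T-t)\mathbi{g}\bigr\rvert^{2}
\]
together with $\lvert\mathbi{f}\rvert^{2}=1$ indeed kills both the $1/\mathcal{X}$ and the $1/(T-t)$ singularities in $u$, leaving coefficients that are locally Lipschitz in $(L,Y)=(\log S,\log\mathcal{X})$ and continuous on $[0,T]$. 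One caveat on the non-explosion step: with your Lyapunov choice $V=L^{2}+Y^{2}$, the generator applied to $V$ picks up the drift $-\tfrac{1}{2}\sigma_{1}^{2}$ of $L$, and $\sigma_{1}^{2}$ grows like $\mathcal{X}=e^{Y}$, which is not dominated by $C(1+V)$. The cleaner route is sequential: the growth bounds in Eq.~\eqref{eq:SWcon} make the drift and diffusion of $Y$ \emph{uniformly bounded}, so $Y$ (hence $\mathcal{X}$) cannot explode on $[0,T]$; once $\mathcal{X}$ is pathwise bounded, $\sigma_{1}$ is too, and then $L$ cannot explode either. This two-step argument replaces the single Khasminskii test and closes the gap.
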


Two observations follow from Proposition~\ref{prop:SW33}.

First, we note that
\[
\sigma_1(t,S_t,\mathcal{X}_t) \coloneqq -\frac{k_t}{2} v_1(t,S_t,\mathcal{X}_t) + \sqrt{\mathcal{X}_t} \big[ f_1(t,S_t,\mathcal{X}_t) + (T-t) \, g_1(t,S_t,\mathcal{X}_t) \big]
\]
may change sign, which implies that in general the spot volatility of the underlying asset is not bounded from below.

Second, the main question now is how to choose $\mathbi{f}$, $\mathbi{g}$ and $\mathbi{v}$ so that the constraints in Eq.~\eqref{eq:SWcon} are satisfied. For example, a valid family of $(\mathbi{f},\mathbi{g},\mathbi{v})$ is obtained setting
\[
f_2(t,S_t,\mathcal{X}_t) = g_2(t,S_t,\mathcal{X}_t) = v_2(t,S_t,\mathcal{X}_t)\equiv 0
\]
and specifying $f_1, g_1, v_1$ so that 
\[
\begin{cases}
	f_1(t,S_t,\mathcal{X}_t) \equiv \pm 1 \\
	\lvert g_1(t,S_t,\mathcal{X}_t) \rvert \le const \\
	\lvert v_1(t,S_t,\mathcal{X}_t) \rvert \le \frac{const}{1 + \sqrt{\mathcal{X}_t} + \lvert k_t \rvert}.
\end{cases}
\]
Starting from here, in the next subsection, we provide an explicit sub-family of valid $(\mathbi{f},\mathbi{g},\mathbi{v})$.


\subsubsection{An explicit sub-family}

Under the assumption
\[
f_2(t,S_t,\mathcal{X}_t) = g_2(t,S_t,\mathcal{X}_t) = v_2(t,S_t,\mathcal{X}_t)\equiv 0,
\]
let
\begin{gather*}
	f_1(t,S_t,\mathcal{X}_t) \equiv 1, \\
	g_1(t,S_t,\mathcal{X}_t) \text{ be a bounded and locally Lipschitz on $(0,\infty)^2$ function}, \\
	v_1(t,S_t,\mathcal{X}_t) \coloneqq \frac{S_t}{1 + \mathcal{X}_t + S_t^2} \, w_1(t,S_t,\mathcal{X}_t),
\end{gather*}
where $w_1(t,S_t,\mathcal{X}_t)$ is a bounded and locally Lipschitz on $(0,\infty)^2$ function.

It follows that all the constraints in Eq.~\eqref{eq:SWcon} are satisfied and, defining 
\[
\sigma_1(t,S_t,\mathcal{X}_t) \coloneqq - \frac{k_t}{2} \frac{S_t}{1 + \mathcal{X}_t + S_t^2} w_1(t,S_t,\mathcal{X}_t) + \sqrt{\mathcal{X}_t} \big[ 1 + (T-t) g_1(t,S_t,\mathcal{X}_t)\big],
\]
Proposition~\ref{prop:SW33} ensures the existence of well-behaved solutions to the system in Eq.~\eqref{eq:SW1}, where $u$ satisfies Eq.~\eqref{eq:SW2}. The family of solutions is indexed by the double family of bounded and locally Lipschitz functions $(g_1, w_1)$.

For example, we can take  
\[
\begin{cases}
	g_1(t,S_t,\mathcal{X}_t) \equiv 1 \\
	w_1(t,S_t,\mathcal{X}_t) = - \sgn(k_t)
\end{cases}
\]
so that $\sigma_1 > 0$.

Another choice ensuring the strictly positivity of the spot volatility is 
\[
\begin{cases}
	g_1(t,S_t,\mathcal{X}_t) \geq 0 \\
	w_1(t,S_t,\mathcal{X}_t) \equiv 0.
\end{cases}
\]
In this case, $\mathcal{X}_t$ is a finite variation process satisfying
\[
d\mathcal{X}_t = u_t \mathcal{X}_t \, dt
\]
where
\[
u_t = \frac{1}{T-t} \bigg( 1 - \frac{\sigma_1^2(t,S_t,\mathcal{X}_t)}{\mathcal{X}_t} \bigg).
\]
Moreover, it holds:
\[
\begin{split}
	d(T-t) \mathcal{X}_t & = - \mathcal{X}_t \, dt + (T-t) d\mathcal{X}_t \\
	& = - \mathcal{X}_t \, dt + (T-t) u_t \mathcal{X}_t dt \\
	& = - \sigma_1^2(t,S_t,\mathcal{X}_t) \, dt
\end{split}
\]
where
\[
\sigma_1(t,S_t,\mathcal{X}_t) = \sqrt{\mathcal{X}_t} \big[ 1 + (T-t) \, g_1(t,S_t,\mathcal{X}_t)\big].
\]

\subsubsection{Mapping into the IRV framework}

Starting from Eq.~\eqref{eq:SW1}--\eqref{eq:SW2}, we derive the IRV dynamics under the assumption $\sigma_1 \ne 0$:
\[
\begin{split}
	d\omega_t = & \, d\big((T-t) \mathcal{X}_t \big) \\
	= & \, -\mathcal{X}_t \, dt + (T-t) \, d\mathcal{X}_t \\
	= & \, \Bigg[ \frac{v_{2,t}^2 / \sigma_{1,t}^2 + v_{1,t}^2 / \sigma_{1,t}^2}{16} \, \omega_t^2 + \frac{v_{2,t}^2 / \sigma_{1,t}^2 + v_{1,t}^2 / \sigma_{1,t}^2 - 2 v_{1,t} / \sigma_{1,t}}{4} \, \omega_t \\
	& - \bigg( \frac{v_{2,t}^2 / \sigma_{1,t}^2 + v_{1,t}^2 / \sigma_{1,t}^2}{4} k_t^2 + \frac{v_{1,t}}{\sigma_{1,t}} k_t + 1 \bigg) \Bigg] \sigma_{1,t}^2 \, dt + v_{2,t} \omega_t \, dZ_t + v_{1,t} \omega_t \, dW_t.
\end{split}
\]
Defining
\begin{equation*}
	\begin{cases}
		b_t \coloneqq v_{2,t} / \sigma_{1,t} \\
		c_t \coloneqq v_{1,t} / \sigma_{1,t}
	\end{cases}
\end{equation*}
we obtain
\[
d\omega_t = \Bigg[ \frac{b_t^2 + c_t^2}{16} \, \omega_t^2 + \frac{b_t^2 + c_t^2 - 2 c_t}{4} \, \omega_t - \bigg( \frac{b_t^2 + c_t^2}{4} k_t^2 + c_t k_t + 1 \bigg) \Bigg] \sigma_{1,t}^2 \, dt + b_t \sigma_{1,t} \omega_t \, dZ_t + c_t \sigma_{1,t} \omega_t \, dW_t
\]
which is exactly the IRV master SDE of Eq.~\eqref{eq:masterSDE}. It means that if $\mathcal{X}$ satisfies the drift restriction derived by~\textcite{SchweizerWissel:2008b}, then the IRV satisfies the no-drift condition in Eq.~\eqref{eq:nodrift} and follows a locally consistent dynamics. 

Being $\omega_t = (T-t) \mathcal{X}_t$, it immediately follows:
\begin{corollary}\label{cor:SWIRV}
	If the conditions in Proposition~\ref{prop:SW33} are satisfied, then the IRV process exists as the unique solution of the IRV master SDE and follows a globally consistent dynamics.
\end{corollary}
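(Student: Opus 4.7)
The plan is to transport the Schweizer-Wissel existence/uniqueness result for $\mathcal{X}$ through the time-scaling $\omega_t = (T-t)\,\mathcal{X}_t$, and then read off global consistency from the boundary behavior at $t=T$.

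First, I would invoke Proposition~\ref{prop:SW33}: under the stated hypotheses there is a unique positive, nonexploding solution $(S,\mathcal{X})$ of the system~\eqref{eq:SW1}--\eqref{eq:SW2} on $[0,T]$. Define $\omega_t\coloneqq (T-t)\,\mathcal{X}_t$. The computation performed just before the corollary (It\^o's product rule applied to $(T-t)\,\mathcal{X}_t$ together with the drift restriction~\eqref{eq:SW2}) shows that, with $b_t\coloneqq v_{2,t}/\sigma_{1,t}$ and $c_t\coloneqq v_{1,t}/\sigma_{1,t}$, the process $\omega$ satisfies exactly the IRV master SDE~\eqref{eq:masterSDE}. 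Hence $\omega$ exists as a solution.

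For uniqueness, suppose $\tilde{\omega}$ is another solution of~\eqref{eq:masterSDE} with the same initial condition and driven by the same $(W,Z,\sigma_1)$. Setting $\tilde{\mathcal{X}}_t\coloneqq \tilde{\omega}_t/(T-t)$ and running the product rule in reverse, $\tilde{\mathcal{X}}$ satisfies~\eqref{eq:SW1} with the drift restriction~\eqref{eq:SW2}. By the uniqueness part of Proposition~\ref{prop:SW33}, $\tilde{\mathcal{X}}=\mathcal{X}$, hence $\tilde{\omega}=\omega$. In particular $\omega$ follows a locally consistent dynamics.

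Finally, to obtain global consistency I must verify that $\tau=\tau^0=T$ almost surely, where $\tau^0,\tau^\infty,\tau$ are the limits of the stopping times in~\eqref{eq:stopping}. Because $\mathcal{X}$ is positive and nonexploding on the compact interval $[0,T]$, its sample paths are a.s. bounded and bounded away from $0$, so there exists (a.s.) $M>0$ with $1/M\le \mathcal{X}_t\le M$ for all $t\in[0,T]$. Consequently $\omega_t=(T-t)\mathcal{X}_t$ is a.s. bounded on $[0,T]$, giving $\tau^\infty_n\ge T$ for all $n$ large enough, and hence $\tau^\infty\ge T$ a.s. Moreover $\omega_t\ge (T-t)/M\to 0$ as $t\uparrow T$, which forces $\tau^0_n\uparrow T$ a.s., i.e.\ $\tau^0=T$. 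Combining, $\tau=\tau^0=T$ a.s., which by Proposition~\ref{prop:noa1} is the definition of a globally consistent dynamics.

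The only delicate step is the last one: one needs the uniform (in $t$) upper and lower control of $\mathcal{X}$ on $[0,T]$ rather than mere nonexplosion, but this is built into the "unique positive (nonexploding) solution on $[0,T]$" statement of Proposition~\ref{prop:SW33}, so no new work is required. The rest is a direct dictionary between the SW formulation in terms of $\mathcal{X}$ and the IRV formulation in terms of $\omega$.
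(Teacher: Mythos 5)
Your proof is correct and takes essentially the same approach as the paper: the corollary is obtained by transporting the Schweizer--Wissel existence/uniqueness for $\mathcal{X}$ through the map $\omega_t=(T-t)\mathcal{X}_t$, using the It\^o computation carried out just before the statement to identify the IRV master SDE. The paper treats global consistency as immediate from $\omega_T=(T-T)\mathcal{X}_T=0$ together with positivity and nonexplosion of $\mathcal{X}$ on $[0,T]$, which is precisely what you spell out in your final step.
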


Corollary~\ref{cor:SWIRV} requires to specify an expression for the spot volatility, contrary to the aim of the new framework of identifying arbitrage-free IRV models leaving the spot volatility dynamics unspecified. However, it is relevant because it ensures the existence of well-defined IRV processes following a globally consistent dynamics. For example, those corresponding to the previous explicit sub-family. In those cases, we have
\[
\begin{cases}
	dS_t = S_t \sigma_1(t,S_t,\mathcal{X}_t) \, dW_t \\
	d\omega_t = \Big[ \frac{c_t^2}{16} \, \omega_t^2 + \frac{c_t^2 - 2 c_t}{4} \, \omega_t - \Big( \frac{c_t^2}{4} k_t^2 + c_t k_t + 1 \Big) \Big] \sigma_1^2(t,S_t,\mathcal{X}_t) \, dt + c_t \sigma_1(t,S_t,\mathcal{X}_t)\,  \omega_t \, dW_t
\end{cases}
\]
where
\[
\sigma_1(t,S_t,\mathcal{X}_t) = -\frac{k_t}{2} v_1(t,S_t,\mathcal{X}_t) + \sqrt{\mathcal{X}_t} \big[ f_1(t,S_t,\mathcal{X}_t) + (T-t) \, g_1(t,S_t,\mathcal{X}_t) \big]
\]
and
\[
c_t = \hat{c}(t,S_t,\mathcal{X}_t) = \frac{v_1(t,S_t,\mathcal{X}_t)}{\sigma_1(t,S_t,\mathcal{X}_t)}.
\]
Here, we note that the Brownian motion driving the IRV is just $W$, the same driving the underlying asset price. Moreover, $c$ tends to infinite as $\sigma_1$ tends to $0$, whereas the product $c \sigma_1$ remain bounded.

Finally, we recall that~\textcite{SchweizerWissel:2008b} addressed the problem of absence of arbitrage in a market with a smile, too. In that case, they worked with what they named \emph{local implied volatility} and \emph{price level}, identifying different drift restrictions and conditions for the existence of well-behaved solutions. However, the map of those results in terms of implied remaining variance and underlying asset price requires further investigation.


\subsection{Surface Stochastic Volatility Inspired parametrization}

The main question of~\textcite{ElAmraniJacquierMartini:2020} was whether there exist IRV smile models such that  smiles at any time satisfy the symmetric \emph{Surface Stochastic Volatility Inspired (SSVI)} parametrization (see~\cite{Gatheral:2004},~\cite{Gatheral:2006},~\cite{GatheralJacquier:2014}) and the market is arbitrage-free. 

Fix a maturity $T$ and, for any $K \in (0,+\infty)$, consider the IRV smile model
\begin{equation}\label{eq:SVI1}
\begin{cases}
	S_t = \exp\Big\{\int_0^t \sigma_s \, dW_s - \frac{1}{2} \int_0^t \sigma_s^2 \, ds\Big\} \\
	\omega_t(K) = \frac{1}{2} \Big( \theta_t + \sqrt{\theta_t^2 + \psi_T^2 k_t^2}\Big) 
\end{cases}
\end{equation}
where $W$ is a standard Brownian motion, $\sigma$ is an almost surely positive process satisfying the Novikov condition\footnote{$\mathbb{E}\Big[\exp\Big\{\frac{1}{2} \int_0^T \sigma_t^2 \, dt\Big\}\Big] < \infty$.}, $k_t \coloneqq \log(K/S_t)$, $\psi_T$ is a nonnegative constant and, according with the SSVI parametrization,  $\theta$ is the IRV process corresponding to the time-$t$ at-the-money call. Moreover, assume that
\begin{equation}\label{eq:SVI2}
	d\theta_t = \frac{(\psi_T -4)(\psi_T + 4)}{16} \sigma_t^2 \, dt - \psi_T \sigma_t \, dZ_t \\
\end{equation}
where $Z$ is a standard Brownian motion independent of $W$. We note that, a priori, Eq.~\eqref{eq:SVI2} does not guarantee that $\theta$ is positive almost surely.

\textcite{ElAmraniJacquierMartini:2020} proved that the market described by Eq.~\eqref{eq:SVI1} is arbitrage-free only if Eq.~\eqref{eq:SVI2} is satisfied, $\theta$ almost surely converges to $0$ as $t \rightarrow T$ and $\psi_T = 0$. It follows that IRV smiles are always flat. However, if the market described by Eq.~\eqref{eq:SVI1}--\eqref{eq:SVI2} is free of static arbitrage at the initial time, then they proved that it is also arbitrage-free up to a stopping time $\tau < T$, a.s. In particular, $\tau$ is the first time when static arbitrage conditions are violated. This is what they named a \emph{smile bubble}. Consistently with Section~\ref{sec:masterSDE}, we aim to show that SSVI smile bubbles correspond to locally consistent dynamics, i.e., $\omega$ derived from Eq.~\eqref{eq:SVI1}--\eqref{eq:SVI2} satisfies the IRV master SDE in Eq.~\eqref{eq:masterSDE} for any $K$.

For notational simplicity, we set 
\[
A_t(K) \coloneqq \sqrt{\theta_t^2 + \psi_T^2 k_t^2}
\]
where $\theta$ satisfies Eq.~\eqref{eq:SVI2}, so that
\[
\omega_t(K) = \frac{1}{2} \big( \theta_t + A_t(K) \big).
\]

For $n \ge 1$, we define the following sequences of stopping times
\[
\xi_n^0 \coloneqq \inf \big\{ t \ge 0: \theta_t  \leq 1/n \big\}, \qquad \xi_n^{\infty} \coloneqq \inf \big\{ t \ge 0: \theta_t \ge n \big\}, \qquad \xi_n \coloneqq \xi_n^0 \wedge \xi_n^{\infty},
\]
and we assume that they almost surely converge
\[
\xi^0 \coloneqq \lim_{n \rightarrow + \infty} \xi_n^0, \qquad \xi^{\infty} \coloneqq \lim_{n \rightarrow + \infty} \xi_n^{\infty}, \qquad \xi \coloneqq \xi^0 \wedge \xi^{\infty}.
\]
In this way, $A^{\xi} _t(K) \coloneqq A_{t \wedge \xi}(K) \in (0,+\infty)$ and $\omega^{\xi}_t(K) \coloneqq \omega_{t \wedge \xi}(K) \in (0,+\infty)$, for any $K \in (0,+\infty)$ and $t \in [0,T)$.

We apply It\^o formula to $\omega^{\xi}$, omitting the argument $(K)$:
\begin{equation}\label{eq:SVI3}
	\begin{split}
		\omega_t^{\xi} = \omega_0 & \, + \int_0^{t \wedge \xi} \frac{1}{2} \Bigg[ \bigg( 1 + \frac{\theta_s}{A_s}\bigg) \frac{(\psi_T - 4)(\psi_T + 4)}{16} + \frac{\psi_T^2 k_s}{2 A_s} + \frac{1}{2 A_s^3} \bigg( \psi_T^4 k_s^2 + \theta_s^2 \psi_T^2 \bigg)  \Bigg] \sigma_s^2 \, ds \\
		& \, - \int_0^{t \wedge \xi} \frac{1}{2} \bigg( 1 + \frac{\theta_s}{A_s} \bigg) \psi_T \sigma_s  \, dZ_s - \int_0^{t \wedge \xi} \frac{\psi_T^2 k_s}{2 A_s} \sigma_s \, dW_s.
	\end{split}
\end{equation}
We note that the framework is that of a market with an IRV smile driven by two independent 1-dimensional Brownian motions: $W$, the one driving the underlying asset price; and $Z$, common to all strike prices. 

To verify that Eq.~\eqref{eq:SVI3} is consistent with the IRV master SDE, let us assume $\omega_0 \in (0,+\infty)$ for any $K$, and denote
\begin{gather*}
	a_t^{\xi} = \frac{1}{2} \Bigg[ \bigg( 1 + \frac{\theta_t^{\xi}}{A_t^{\xi}}\bigg) \frac{(\psi_T -4)(\psi_T + 4)}{16} + \frac{\psi_T^2 k_t^{\xi}}{2 A_t^{\xi}} + \frac{1}{2 (A_t^{\xi})^3} \bigg( \psi_T^4 (k_t^{\xi})^2 + (\theta_t^{\xi})^2 \psi_T^2 \bigg)  \Bigg] \\
	b_t^{\xi} = - \frac{1}{2 \omega_t^{\xi}} \bigg( 1 + \frac{\theta_t^{\xi}}{A_t^{\xi}} \bigg) \psi_T \\
	c_t^{\xi} = - \frac{\psi_T^2 k_t^{\xi}}{2 A_t^{\xi} \omega_t^{\xi}}.
\end{gather*}
Then the no-drift condition of Eq.~\eqref{eq:nodrift} requires
\[
\begin{split}
	a_t^{\xi} = & \, \frac{1}{64} \Bigg[ \frac{\big( A_t^{\xi} + \theta_t^{\xi} \big)^2}{(A_t^{\xi})^2} \psi_T^2 + \frac{\psi_T^4 (k_t^{\xi})^2}{(A_t^{\xi})^2}\Bigg] + \frac{1}{16 \omega_t^{\xi}} \Bigg[ \frac{\big( A_t^{\xi} + \theta_t^{\xi}\big)^2}{(A_t^{\xi})^2} \psi_T^2+ \frac{\psi_T^4 (k_t^{\xi})^2}{(A_t^{\xi})^2} \Bigg] + \frac{\psi_T^2 k_t^{\xi}}{4 A_t^{\xi}}\\
	& - \frac{(k_t^{\xi})^2}{16 (\omega_t^{\xi})^2} \Bigg[\frac{\big( A_t^{\xi} + \theta_t^{\xi} \big)^2}{(A_t^{\xi})^2} \psi_T^2+ \frac{\psi_T^4 (k_t^{\xi})^2}{(A_t^{\xi})^2} \Bigg] + \frac{\psi_T^2 (k_t^{\xi})^2}{2 A_t^{\xi} \omega_t^{\xi}} - 1 \\
	= & \, \frac{1}{32 A_t^{\xi} (\omega_t^{\xi})^2}  \bigg[\psi_T^2 \big(A_t^{\xi} + \theta_t^{\xi} \big) (\omega_t^{\xi})^2 + 4 \psi_T^2 \big(A_t^{\xi} + \theta_t^{\xi} \big) \omega_t^{\xi} + 8 \psi_T^2 k_t^{\xi} (\omega_t^{\xi})^2 \\
	& \, - 4 \psi_T^2 (k_t^{\xi})^2 \big(A_t^{\xi} + \theta_t^{\xi} \big) + 16 \psi_T^2 (k_t^{\xi})^2 \omega_t^{\xi} - 32 A_t^{\xi} (\omega_t^{\xi})^2\bigg]
\end{split}
\]
that is
\begin{multline*}
	\frac{1}{2} \Bigg[ \bigg( 1 + \frac{\theta_t^{\xi}}{A_t^{\xi}}\bigg) \frac{(\psi_T -4)(\psi_T + 4)}{16} + \frac{\psi_T^2 k_t^{\xi}}{2 A_t^{\xi}} + \frac{1}{2 (A_t^{\xi})^3} \bigg( \psi_T^4 (k_t^{\xi})^2 + (\theta_t^{\xi})^2 \psi_T^2 \bigg)  \Bigg] = \\
	= \frac{1}{32 A_t^{\xi} (\omega_t^{\xi})^2}  \bigg[\psi_T^2 \big(A_t^{\xi} + \theta_t^{\xi} \big) (\omega_t^{\xi})^2 + 4 \psi_T^2 \big(A_t^{\xi} + \theta_t^{\xi} \big) \omega_t^{\xi} + 8 \psi_T^2 k_t^{\xi} (\omega_t^{\xi})^2 \\
	- 4 \psi_T^2 (k_t^{\xi})^2 \big(A_t^{\xi} + \theta_t^{\xi} \big) + 16 \psi_T^2 (k_t^{\xi})^2 \omega_t^{\xi} - 32 A_t^{\xi} (\omega_t^{\xi})^2\bigg]
\end{multline*}
equivalent to
\begin{multline*}
	\frac{\psi_T^2 \big(A_t^{\xi} + \theta_t^{\xi} + 8 k_t^{\xi} + 8\big) - 16 \big(A_t^{\xi} + \theta_t^{\xi} \big)}{32 A_t^{\xi}} = \\
	= \frac{1}{32 A_t^{\xi} (\omega_t^{\xi})^2}  \bigg[\psi_T^2 \big(A_t^{\xi} + \theta_t^{\xi} \big) (\omega_t^{\xi})^2 + 4 \psi_T^2 \big(A_t^{\xi} + \theta_t^{\xi} \big) \omega_t^{\xi} + 8 \psi_T^2 k_t^{\xi} (\omega_t^{\xi})^2 \\
	- 4 \psi_T^2 (k_t^{\xi})^2 \big(A_t^{\xi} + \theta_t^{\xi} \big) + 16 \psi_T^2 (k_t^{\xi})^2 \omega_t^{\xi} - 32 A_t^{\xi} (\omega_t^{\xi})^2\bigg]
\end{multline*}
and hence
\begin{multline*}
	\psi_T^2 \big(A_t^{\xi} + \theta_t^{\xi} + 8 k_t^{\xi} + 8\big) \big(A_t^{\xi} + \theta_t^{\xi} \big)^2 - 16 \big(A_t^{\xi} + \theta_t^{\xi} \big)^3 = \\
	= \psi_T^2 \big(A_t^{\xi} + \theta_t^{\xi} + 8 k_t^{\xi} + 8\big) \big(A_t^{\xi} + \theta_t^{\xi} \big)^2 - 16 \big(A_t^{\xi} + \theta_t^{\xi} \big)^3
\end{multline*}
which holds true.

We conclude that $\omega^{\xi}$ as derived from Eq.~\eqref{eq:SVI1}--\eqref{eq:SVI2} satisfies the IRV master SDE for any $K \in (0,+\infty)$ or, equivalently, that the corresponding IRV dynamics is locally consistent. Moreover, we note that $\xi$ controls for zeroing and explosiveness of $\omega$. Therefore, Corollary~\ref{cor:noa2} ensures that if, almost surely:
\begin{itemize}
	\item $\xi = \xi^0 = T$ (as it is the case when $\theta$ is almost surely finite, $\lim_{t \rightarrow T^-} \theta_t = 0$ and $\varphi_T = 0$), then the market is arbitrage-free; 
	\item the market is free of static arbitrage until $\xi$ and $\xi < T$, then the market is arbitrage-free on $[0,\xi)$.
\end{itemize} 

Conveniently,~\textcite{GatheralJacquier:2014} identified necessary and sufficient conditions on the SSVI parameters for absence of static arbitrage. From those,~\textcite{ElAmraniJacquierMartini:2020} derived that there is no static arbitrage if and only if $\psi_T^2 \le \mathfrak{B}(\theta)$ for any $\theta > 0$, where $\mathfrak{B}:[0,+\infty) \rightarrow [0,16]$ is defined as
\[
\mathfrak{B}(\theta) \coloneqq A(\theta) \, \mathds{1}_{\{\theta < 4\}} + 16 \, \mathds{1}_{\{\theta \ge 4\}}
\]
with
\[
A(\theta) \coloneqq \frac{16 \theta \zeta_{\theta} (\zeta_{\theta}+1)}{8(\zeta_{\theta}-2)+\theta \zeta_{\theta}(\zeta_{\theta}-1)} \qquad \text{and} \qquad \zeta_{\theta} \coloneqq \frac{2}{1-\theta/4} + \sqrt{\bigg(\frac{2}{1-\theta/4}\bigg)^2 + \frac{2}{1-\theta/4}}.
\]

Adopting the notation of~\textcite{ElAmraniJacquierMartini:2020}, let $\mathfrak{B}^{\leftarrow}: [0,16) \rightarrow [0,4)$ be the inverse of the restriction of the function $\mathfrak{B}$ to the domain $[0,4)$. Assuming $\psi_T \in [0,4)$, we can define the stopping time
\begin{equation}\label{eq:SVI4}
	\tau \coloneqq \xi \wedge \inf \big\{ t \ge 0: \theta_t < \mathfrak{B}^{\leftarrow}(\psi_T^2)\big\}.
\end{equation}

From the discussion above, it follows that:
\begin{itemize}
	\item if $\psi_T \in [0,4)$, $\theta_0 > \mathfrak{B}^{\leftarrow}(\psi_T^2)$, and $\tau < T$, almost surely, then the IRV smile dynamics implied by the SSVI parametrization in Eq.~\eqref{eq:SVI1}--\eqref{eq:SVI2} is locally consistent and the market is arbitrage-free on $[0,\tau)$;
	\item if $\theta$ is almost surely finite and converges to $0$ as $t \rightarrow T$, and $\psi_T = 0$, then each IRV dynamics is globally consistent, the market is arbitrage-free, but $\omega = \theta$ for any $t$ and $K$, i.e., we always observe flat smiles.
\end{itemize}
In other words, we have proved that the findings of~\textcite{ElAmraniJacquierMartini:2020} are consistent with those in Section~\ref{sec:masterSDE}, and SSVI smile bubbles are nothing else than IRV processes following a locally (but not globally) consistent dynamics.
\section{Sandwiched martingales}
\label{sec:sandwiched}

In this section, we introduce the notion of sandwiched martingale and we show how it is related to that of locally consistent dynamics. We consider a single option market and a fixed maturity $T$, so that we can omit the arguments $(K,T)$ of $C$ and $\omega$ because fixed. 

Let $BS_{\bar{k}}: [0,+\infty] \rightarrow \big[ (1 - e^{\bar{k}})_+, 1\big]$ be the restriction of $BS$ to the second variable:
\[
BS_{\bar{k}}(v) \coloneqq BS\big(\bar{k},v\big), \qquad \bar{k} \in \mathbb{R}.
\]
The function $BS_{\bar{k}}$ is strictly increasing so that ${BS}_{\bar{k}}^{-1}: \big[ (1 - e^{\bar{k}})_+, 1\big] \rightarrow [0,+\infty]$ is well-defined.

From the definition of Implied Remaining Variance, it is
\[
\sqrt{\omega_t} = BS^{-1}_{\log(K/S_t)}\big(C^{obs}_t / S_t\big) = BS^{-1}_{k_t}\big(C^{obs}_t / S_t\big), \qquad \text{for any $t \in [0,T)$}.
\]
where $k_t \coloneqq \log(K/S_t)$. 

\begin{theorem}\label{thr:sandwichedres}
	Let $\big(\Omega, \mathcal{F}, (\mathcal{F}_t)_{t \ge 0},\mathbb{Q}\big)$ be a filtered probability space accommodating two independent Brownian motions $W$ and $Z$, and $(\mathcal{F}_t)_{t \ge 0}$ be the augmented filtration generated by them. Moreover:
	\begin{itemize}
		\item $\sigma \in L^2_{loc}\big((0,+\infty)\big)$ is such that the process $S$, defined by 
		\[
		S_t \coloneqq S_0 \cdot \exp\bigg\{ \int_0^t \sigma_s \, dW_s - \frac{1}{2} \int_0^t \sigma_s^2 \, ds \bigg\}, \qquad S_0 \in (0,+\infty), \qquad a.s.
		\]
		for any $t \in [0,+\infty)$, is a martingale. 
		\item $C$ is a progressively measurable process with continuous paths and such that
		\[
		\begin{cases}
			(S_0-K)_+ < C_0 < S_0 \\
			(S_t-K)_+ \le C_t \le S_t
		\end{cases} \qquad a.s.
		\]
		\item For $n \ge 1$, define the following sequences of stopping times
		\[
		\tau_n^0 \coloneqq \inf \Big\{t \ge 0: C_t \le (S_t-K)_+ + 1/n \Big\} \qquad \text{and} \qquad \tau_n^{\infty} \coloneqq \inf \Big\{t \ge 0: C_t \ge S_t - 1/n \Big\}
		\]
		and set $\tau_n \coloneqq \tau^0_n \wedge \tau_n^{\infty}$.
	\end{itemize}
	
	Then the process $C^{\tau_n}$ is a martingale and $\langle C^{\tau_n} \rangle$ is an almost surely absolutely continuous function with respect to time if and only if 
	\[
	\omega^{\tau_n}_t \coloneqq \Big( {BS}_{k_t^{\tau_n}}^{-1} \big(C^{\tau_n}_t/S^{\tau_n}_t\big)\Big)^2
	\]
	where $k_t \coloneqq \log(K/S_t)$, follows a locally consistent dynamics.
\end{theorem}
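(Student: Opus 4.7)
The plan is to establish the two implications separately, both hinging on It\^o's formula applied to the map $c \mapsto \bigl(BS^{-1}_{\log(K/s)}(c/s)\bigr)^2$ and on its compositional inverse $v \mapsto s\cdot BS(\log(K/s),\sqrt v)$. The crucial enabling fact is that on $[0,\tau_n]$ the ratio $C^{\tau_n}_t/S^{\tau_n}_t$ remains strictly bounded away from its intrinsic and upper bounds, so $BS^{-1}_{k}(\,\cdot\,)$ is smooth on the relevant range and $\omega^{\tau_n}$ is bounded away from $0$ and $+\infty$; this legitimises every chain-rule computation below.

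For the reverse direction, suppose $\omega$ follows a locally consistent dynamics. By the very definition of $\omega^{\tau_n}$, we have $C^{\tau_n}_t = S^{\tau_n}_t \cdot BS\bigl(k^{\tau_n}_t,\sqrt{\omega^{\tau_n}_t}\bigr)$, so $\omega$ is the market IRV process. I would then reconcile the two natural localising sequences: the one in the statement, based on the distance of $C$ from its no-arbitrage bounds, and the one in the definition of locally consistent dynamics, based on $\omega$ hitting $1/n$ or $n$. Because $BS(k,\cdot)\colon[0,+\infty]\to[(1-e^k)_+,1]$ is a homeomorphism, the two sequences diverge together and can be interleaved, so it suffices to invoke Theorem~\ref{thr:masterSDE} on a common sub-sequence. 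The martingality of $C^{BS,\tau_n}=C^{\tau_n}$ follows at once, and the absolute continuity of $\langle C^{\tau_n}\rangle$ is read off directly from the It\^o expansion of $C^{BS,\tau_n}$ obtained in the proof of Theorem~\ref{thr:masterSDE}, whose martingale part is a stochastic integral against $dW$ and $dZ$ with $L^2_{loc}$ integrands.

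For the forward direction, assume $C^{\tau_n}$ is a martingale with absolutely continuous quadratic variation. Since $(\mathcal{F}_t)_{t\ge0}$ is the augmented filtration of the independent pair $(W,Z)$, the martingale representation theorem yields progressively measurable $\phi,\psi$ such that
\[
dC^{\tau_n}_t = \phi_t\,dW_t + \psi_t\,dZ_t, \qquad d\langle C^{\tau_n}\rangle_t = (\phi_t^2+\psi_t^2)\,dt,
\]
where the second identity uses absolute continuity. Applying It\^o's formula to $\omega^{\tau_n}_t = F(S^{\tau_n}_t,C^{\tau_n}_t)$ with $F(s,c)\coloneqq\bigl(BS^{-1}_{\log(K/s)}(c/s)\bigr)^2$ produces a decomposition of the form $d\omega^{\tau_n}_t = \tilde a_t\,dt + \tilde b_t\,dZ_t + \tilde c_t\,dW_t$. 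Dividing by the positive quantities $\sigma^{\tau_n}_t$ and $\omega^{\tau_n}_t\sigma^{\tau_n}_t$, I define $a_t\coloneqq \tilde a_t/\sigma_t^2$, $b_t\coloneqq \tilde b_t/(\omega_t\sigma_t)$, $c_t\coloneqq \tilde c_t/(\omega_t\sigma_t)$; the $L^2_{loc}$ integrability condition of item~4 of Theorem~\ref{thr:masterSDE} is inherited from the assumed absolute continuity of $\langle C^{\tau_n}\rangle$ combined with the two-sided bounds on $\omega^{\tau_n}$. Since $C^{\tau_n}=S^{\tau_n}\cdot BS(k^{\tau_n},\sqrt{\omega^{\tau_n}})$ is a martingale by hypothesis, the necessity part of Theorem~\ref{thr:masterSDE} forces the no-drift condition~\eqref{eq:nodrift} on $(a,b,c)$, giving that $\omega$ follows a locally consistent dynamics.

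The main obstacle is twofold. First, one must make rigorous the matching of the two localising sequences --- the $\tau_n$ built from the call process here, and the $\tau_n$ built from $\omega$ in Definition of locally consistent dynamics --- and verify that the latter notion is stable under passage between equivalent localising sequences (so that the coefficients $a,b,c$ obtained from one sequence extend consistently). Second, the It\^o bookkeeping in the forward direction is delicate because $k^{\tau_n}_t = \log(K/S^{\tau_n}_t)$ couples the two Brownian directions: both the $dS$-differential and the $dC^{\tau_n}$-differential contribute to the $dW$-coefficient of $d\omega^{\tau_n}$, so the clean identification of $c_t$ (distinct from $b_t$) requires carefully separating the dependence of $F$ on its first and second arguments.
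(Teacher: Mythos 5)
Your proposal is correct in spirit and, in the forward direction, follows a genuinely different and more careful route than the paper. The paper's proof first applies the martingale representation theorem but writes $C^{\tau_n}_t = C_0 + \int_0^{t\wedge\tau_n}\nu_s\,dZ_s$ with no $dW$-component — which is not what representation on the $(W,Z)$-filtration gives in general — and then applies It\^o's formula to $\bigl(BS^{-1}_{k}(C/S)\bigr)^2$ and to $S\cdot BS_k(\sqrt{\omega})$ treating $k$ and the prefactor $S$ as fixed, which is why it ends up matching only a ``restricted'' version of the no-drift condition (the one obtained by freezing $BS$ to $BS_k$) rather than Eq.~\eqref{eq:nodrift} itself, and why it can set $c^{\tau_n}\equiv 0$. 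Your approach keeps the full representation $dC^{\tau_n}=\phi\,dW+\psi\,dZ$, applies It\^o to the genuine two-argument map $F(s,c)=\bigl(BS^{-1}_{\log(K/s)}(c/s)\bigr)^2$ (so that both the $dS$- and $dC$-differentials feed the $dW$-coefficient, yielding a nonzero $c$ in general), and then delegates the no-drift check entirely to the necessity half of Theorem~\ref{thr:masterSDE}; this last step is a clean piece of economy that makes explicit verification of Eq.~\eqref{eq:nodrift} unnecessary and sidesteps the $BS_k$-restriction issue altogether. In the reverse direction, both you and the paper in effect re-run the sufficiency half of Theorem~\ref{thr:masterSDE}, so there the difference is one of presentation. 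The two points you flag as ``obstacles'' are exactly where the remaining work lies: (i) the localising sequences $\tau_n$ based on $C$ being within $1/n$ of its intrinsic/upper bounds are not literally the $\tau_n$ of Eq.~\eqref{eq:stopping} based on $\omega$ hitting $1/n$ or $n$, so one must either show they are equivalent (which the boundedness of $k^{\tau_n}$ and $\omega^{\tau_n}$ on $[0,\tau_n]$, together with the uniform strict monotonicity of $v\mapsto BS(k,v)$ on the stopped range, can deliver) or check directly that Theorem~\ref{thr:masterSDE}'s argument survives this alternative localisation; and (ii) the It\^o bookkeeping must actually be carried out to confirm that $\tilde a,\tilde b,\tilde c$ satisfy the integrability conditions of item~4 of Theorem~\ref{thr:masterSDE}, which requires the two-sided bound on $\omega^{\tau_n}$ and the nondegeneracy of the vega on the stopped range. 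Neither obstacle is fatal, but your writeup leaves them as named promissory notes rather than closing them.
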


\begin{proof}
	For any $n \ge 1$, let $C^{\tau_n}$ be a martingale and $\langle C^{\tau_n} \rangle$ an almost surely absolutely continuous function with respect to time. By the martingale representation theorem, there exists a measurable adapted process $\nu$ such that 
	\[
	\int_0^{t \wedge \tau_n} \nu_s^2 \, ds < \infty \qquad \text{and} \qquad C^{\tau_n}_t = C_0 + \int_0^{t \wedge \tau_n} \nu_s \, dZ_s, \qquad a.s. \\
	\]
	Monotonicity of $BS_{k^{\tau_n}}^{-1}$ ensures that $\omega^{\tau_n} \in (0,+\infty)$, almost surely. Applying It\^o formula to $\omega^{\tau_n}$, we have
	\[
	\begin{split}
		\omega^{\tau_n}_t - \omega_0 = & \, 2 \int_0^{t \wedge \tau_n} \sqrt{\omega^{\tau_n}_s} \frac{\partial BS_{k_s}^{-1}(C_s/S_s)}{\partial C} \, dC_s \\
		& + \int_0^{t \wedge \tau_n} \bigg[ \bigg(\frac{\partial BS_{k_s}^{-1}(C_s/S_s)}{\partial C} \bigg)^2 + \sqrt{\omega_s} \frac{\partial^2 BS_{k_s}^{-1}(C_s/S_s)}{\partial C^2}\bigg] d\langle C \rangle_s \\
		= & \, 2 \int_0^{t \wedge \tau_n} \frac{\sqrt{\omega_s}}{S_s} \bigg(\frac{\partial BS_{k_s}(\sqrt{\omega_s})}{\partial \sqrt{\omega_s}}\bigg)^{-1} dC_s \\
		& + \int_0^{t \wedge \tau_n} \Bigg[ \frac{1}{S_s^2} \bigg( \frac{\partial BS_{k_s}(\sqrt{\omega_s})}{\partial \sqrt{\omega_s}} \bigg)^{-2} + \frac{\sqrt{\omega_s}}{S_s} \cdot \frac{\partial}{\partial C} \bigg(\bigg(\frac{\partial BS_{k_s}(\sqrt{\omega_s})}{\partial \sqrt{\omega_s}} \bigg)^{-1} \bigg) \Bigg] d \langle C \rangle_s \\
		= & \, 2 \int_0^{t \wedge \tau_n} \frac{\sqrt{\omega_s}}{S_s \cdot \varphi\big(d_+(k_s,\sqrt{\omega_s})\big)} \, dC_s \\
		& + \int_0^{t \wedge \tau_n} \bigg[ \frac{1}{S_s^2 \cdot \varphi^2\big(d_+(k_s,\sqrt{\omega_s})\big)} + \frac{\sqrt{\omega_s}}{S_s} \cdot \frac{\partial}{\partial C} \bigg(\frac{1}{\varphi\big(d_+\big(k_s,BS^{-1}_{k_s}(C_s/S_s)\big)\big)}\bigg) \bigg] d\langle C \rangle_s \\
		= & \, \int_0^{t \wedge \tau_n} \bigg( \frac{\omega_s^2}{4} + \omega_s - k_s^2 \bigg) \frac{\nu_s^2}{\omega_s \, S_s^2 \cdot \varphi^2\big(d_+(k_s,\sqrt{\omega_s})\big)} \, dt \\
		& + 2 \int_0^{t \wedge \tau_n} \frac{\sqrt{\omega_s} \cdot \nu_s}{S_s \cdot \varphi\big(d_+(k_s,\sqrt{\omega_s})\big)} \, dZ_s, \qquad a.s.
	\end{split}
	\]
	where $\varphi(\cdot)$ is the Gaussian probability density function. We observe that the no-drift condition in Eq.~\eqref{eq:nodrift}, if derived replacing the function $BS$ with its restriction $BS_k$,  reads as
	\[
	a^{\tau_n}_t = \bigg( \frac{(\omega_t^{\tau_n})^2}{4} + \omega_t^{\tau_n} - (k_t^{\tau_n})^2 \bigg)  \frac{(b_t^{\tau_n})^2 + (c_t^{\tau_n})^2}{4}, \qquad a.s.
	\]
	Hence, we can say that $\omega^{\tau_n}$ follows a locally consistent dynamics with
	\[
	b^{\tau_n}_n = \frac{2 \nu^{\tau_n}_t }{\sqrt{\omega_t^{\tau_n}} \, \sigma_t^{\tau_n} \, S_t^{\tau_n} \cdot \varphi\big(d_+\big(k_t^{\tau_n}, \sqrt{\omega_t^{\tau_n}}\big)\big)} \qquad \text{and} \qquad c_t^{\tau_n} \equiv 0, \qquad a.s.
	\]
	
	On the other hand,  let $\omega^{\tau_n} \coloneqq \big( {BS}_{k_t^{\tau_n}}^{-1} \big(C^{\tau_n}_t/S^{\tau_n}_t\big)\big)^2$ follow a locally consistent dynamics, that is, for any $n \ge 1$, there exist progressively measurable processes $a$, $b$, $c$ such that
	\[
	\int_0^{t \wedge \tau_n} \Big\{ \big\lvert a_s\sigma_s^2 \big\rvert + \big\lvert b_s \omega_s \sigma_s \big\rvert^2 + \big\lvert c_s \omega_s \sigma_s \big\rvert^2 \Big\} \, ds < \infty, \qquad \text{a.s.}
	\]
	and
	\[
	\omega_t^{\tau_n} = \omega_0 + \int_0^{t \wedge \tau_n} a_s \sigma_s^2 \, ds + \int_0^{t\wedge \tau_n} b_s \omega_s \sigma_s \, dZ_s + \int_0^{t \wedge \tau_n} c_s \omega_s \sigma_s \, dW_s, \qquad \text{a.s.}
	\]
	where
	\[
	a^{\tau_n}_t = \bigg( \frac{(\omega_t^{\tau_n})^2}{4} + \omega_t^{\tau_n} - (k_t^{\tau_n})^2 \bigg)  \frac{(b_t^{\tau_n})^2 + (c_t^{\tau_n})^2}{4}, \qquad a.s.
	\]
	for any $t \in [0,+\infty)$. Omitting the arguments $\big(k^{\tau_n}, \sqrt{\omega^{\tau_n}})$ of $d_-$, we apply the It\^o formula to $C^{\tau_n} = S^{\tau_n} \cdot BS_{k^{\tau_n}}\big(\sqrt{\omega^{\tau_n}}\big)$ and we obtain:
	\[
	\begin{split}
		C^{\tau_n}_t = & \, C_0 + \int_0^{t \wedge \tau_n} b_s \frac{K \sqrt{\omega_s} \, \varphi(d_-)}{2} \sigma_s \, dZ_s + \int_0^{t \wedge \tau_n} c_s \frac{ K \sqrt{\omega_s} \, \varphi(d_-)}{2} \sigma_s \, dW_s \\	
		& + \int_0^{t \wedge \tau_n}\Bigg[ a_s \frac{K \, \varphi(d_-)}{2 \sqrt{\omega_s}}  + \big( b_s^2 + c_s^2 \big)  \frac{K \Big(k^2_s- \frac{\omega_s^2}{4} - \omega_s \Big) \, \varphi(d_-) }{8 \sqrt{\omega_s}} \Bigg] \sigma_s^2 \, \, ds \\
		 = & \, C_0 + \int_0^{t \wedge \tau_n} b_s \frac{K \sqrt{\omega_s} \, \varphi(d_-)}{2} \sigma_s \, dZ_s + \int_0^{t \wedge \tau_n} c_s \frac{ K \sqrt{\omega_s} \, \varphi(d_-)}{2} \sigma_s \, dW_s, \qquad a.s.
	\end{split}
	\]
	for any $t \in [0,+\infty)$. We observe that 
	\[
	\begin{split}
		\int_0^{t \wedge \tau_n} \frac{ K^2  c_s^2 \omega_s \sigma^2_s \, \varphi^2(d_-)}{4}  \, ds & \le \frac{K^2}{8 \pi}  \int_0^{t \wedge \tau_n} \frac{c_s^2 \omega^2_s \sigma^2_s}{\omega_s} \, ds \le \frac{n K^2}{8 \pi}  \int_0^{t \wedge \tau_n} c_s^2 \omega^2_s \sigma^2_s \, ds < \infty, \quad a.s.
	\end{split}
	\]
	so that 
	\[
	\sqrt{\omega^{\tau_n}}  \, {c^{\tau_n}} \, {\sigma^{\tau_n}}  \, \varphi(d_-) \in L^2_{loc}\big([0,+\infty)\big).
	\]
	In the same manner, it can be proved that 
	\[
	\sqrt{\omega^{\tau_n}} \, {b^{\tau_n}} \, {\sigma^{\tau_n}} \,  \varphi(d_-) \in L^2_{loc}\big([0,+\infty)\big).
	\]
	Therefore, $C^{\tau_n}$ is a local martingale because it is equal to the sum of a constant and two local martingales. Applying twice the Put-Call-Parity (as in the proof of Lemma~\ref{lemma:locmart}), it can be shown that $C^{\tau_n}$ is also a martingale. Finally, we have that
	\[
	\langle C^{\tau_n} \rangle_t = \frac{K^2}{2} \int_0^{t \wedge \tau_n} \big(b_s^2 + c_s^2 \big) \omega_s \, \varphi^2(d_-) \sigma_s^2 \, ds, \qquad a.s.,
	\]
	that is $\langle C^{\tau_n} \rangle$ is an almost surely absolutely continuous function with respect to time because $\big((b^{\tau_n})^2 + (c^{\tau_n})^2 \big) \omega^{\tau_n} \, \varphi^2(d_-) \, (\sigma^{\tau_n})^2 \in L^1_{loc}([0,+\infty))$.
\end{proof}

$C^{\tau_n}$ in Theorem~\ref{thr:sandwichedres} is an example of what we call a sandwiched martingale:
\begin{definition}
	On $\big(\Omega,\mathcal{F},(\mathcal{F}_t)_{t \ge 0}, \mathbb{Q}\big)$, a martingale $M$ is said to be a \emph{sandwiched martingale} if there exist two progressively measurable processes $l$ and $u$ such that almost surely $l \le u$ and 
	\[
	\begin{cases}
		l_0 < M_0 < u_0 \\
		l_t \le M_t \le u_t
	\end{cases}
	\]
	for any $t > 0$.
\end{definition}

Theorem~\ref{thr:sandwichedres} stresses that it is sufficient that the call price process is described by the sandwiched martingale $C^{\tau_n}$ for the corresponding IRV to follow a locally consistent dynamics, whereas Theorem~\ref{thr:masterSDE} starts from the assumption that $\omega^{\tau_n}$ satisfies Eq.~\eqref{eq:SDE}. Significantly, Definition~\ref{def:nostaticarb} ensures that call price processes on arbitrage-free markets with smiles are described by sandwiched martingales with boundaries no broader than those in Theorem~\ref{thr:sandwichedres}, guaranteeing the local consistency of the corresponding IRV dynamics. We also note that if it is almost surely $\lim_{n \rightarrow +\infty} \tau_n = \lim_{n \rightarrow +\infty} \tau^0_n = T$, then the corresponding IRV dynamics is globally consistent. These results are consistent with those in Section~\ref{sec:masterSDE} and confirm the relationship between sandwiched martingales, locally consistent dynamics and arbitrage-free markets.

It should be clear now that identifying sandwiched martingales describing call prices is equivalent to finding solutions to the IRV master SDE. Below we provide an example of sandwiched martingale for a single option market, while in Appendix~\ref{app:exsan} we deal with a three-options market.

\begin{example}[Single option market]
	Let $\sigma$ satisfy the Novikov condition and $S$ be defined as
	\[
	\begin{cases}
		S_t = S_0 \cdot \exp\Big\{ \int_0^t \sigma_s \, dW_s - \frac{1}{2} \int_0^t \sigma_s^2 \, ds \Big\} \\
		S_0 > 0
	\end{cases}
	\]
	for any $t \in [0,T]$. It follows that $S$ is a martingale. Consider a positive martingale $N$, bounded by $1$ on $[0,T]$ and independent of $S$. For example, let
	\[
	N_t \coloneqq \exp\bigg\{ - \bigg[ T + (Z_t^2 - t ) + 2 \int_0^t Z_s^2 \, ds\bigg] \bigg\},
	\]
	where $Z$ is a Brownian motion independent of $W$. Assume $S_0 \neq K$ and define the stopping time 
	\[
	\tau \coloneqq \inf \big\{ t \in [0,T]: S_t = K\}.
	\]
	Then, the stopped process 
	\begin{equation}\label{eq:sanC}
	C^{\tau}_t \coloneqq \big(S^{\tau}_t - K \big)_+ + N^{\tau}_t \cdot  \Big[ S^{\tau}_t - \big(S^{\tau}_t - K \big)_+ \Big]
	\end{equation}
	is a sandwiched martingale such that
	\[
	\big(S^{\tau}_t - K \big)_+ < C^{\tau}_t < S^{\tau}_t
	\]
	for any $t \in [0,T]$. Indeed:
	\begin{itemize}
		\item if $S_0 > K$, then $C^{\tau}$ in Eq.~\eqref{eq:sanC} is given by
		\[
		C^{\tau}_t = S^{\tau}_t - \big(1 - N^{\tau}_t \big) K \in \big( S_t^{\tau}-K, S_t^{\tau} \big)
		\]
		for any $t \in [0,T]$, and is a martingale because it is sum of two independent martingales and a constant;
		\item if $S_0 < K$, then $C^{\tau}$ in Eq.~\eqref{eq:sanC} is given by
		\[
		C^{\tau}_t = N^{\tau}_t S^{\tau}_t \in \big( 0, S_t^{\tau} \big)
		\]
		for any $t \in [0,T]$, and is a martingale (w.r.t. its natural filtration) because product of two independent martingales (each w.r.t. its natural filtration).
	\end{itemize}
\end{example}

\begin{remark}
	In an arbitrage-free market, the price process of a call option with strike $K_1 < K_2$ must be a sandwiched martingale with values in the interval 
	\[
	\big( (S_t - K_1)_+, S_t \big) \subset \big( (S_t - K_2)_+, S_t \big)
	\]
	for any $t \in [0,T)$. It follows that if we observe a sandwiched martingale lying between $(S_t - K_1)_+$ and $S_t$ for any $t \in [0,T)$, we cannot say which is the option whose price process we are observing.
\end{remark}
\section{Conclusion}
\label{sec:conclusion}

Starting from some issues in the paper by~\textcite{CarrSun:2014},  we have developed a rather general framework allowing Implied Remaining Variance models to replace standard option pricing models. According to the new framework, absence of arbitrage is related to the notion of locally (and globally) consistent dynamics. In particular, it means that the IRV process satisfies the IRV master SDE. We have mapped the findings of~\textcite{SchweizerWissel:2008b} into the new IRV framework, identifying a sufficient condition for the existence of a unique globally consistent dynamics and providing explicit family of solutions. Notwithstanding, the IRV master SDE remains an interesting object per se. In markets with a smile, we have stressed the distinction between static and dynamic arbitrage. While globally consistent dynamics ensure the absence of both types of arbitrage, no static arbitrage conditions must be imposed when the IRV dynamics is just locally consistent. These findings are consistent with the paper by~\textcite{ElAmraniJacquierMartini:2020}. The notion of smile bubble is embedded into the new framework, too. Finally, we have shown the equivalence between IRV processes following a locally consistent dynamics and option prices described by sandwiched martingales. In future research, we plan to extend the new framework to a market on which options with different maturities are traded.

\newpage
\begin{appendices}

\section{Sandwiched martingales for a three-option market}
\label{app:exsan}

In this appendix, we consider a market on which are traded three call options with strike prices $0 < K_1 < K_2 < K_3 < +\infty$ and common maturity $T > 0$. We aim to provide an example of sandwiched martingales satisfying all the no-static-arbitrage conditions in Definition~\ref{def:nostaticarb}. For notational simplicity, we denote $C_i = C(K_i)$, with $i \in \{1,2,3\}$.

Let $\sigma$ satisfy the Novikov condition and $S$ be defined as
\begin{equation}\label{eq:Sapp}
\begin{cases}
	S_t = S_0 \cdot \exp\Big\{ \int_0^t \sigma_s \, dW_s - \frac{1}{2} \int_0^t \sigma_s^2 \, ds \Big\} \\
	S_0 > 0
\end{cases}
\end{equation}
for any $t \in [0,T]$. It follows that $S$ is a martingale. 

Assume $K_3 < S_0 $ and define $\tau \coloneqq \inf \big\{ t \in [0,T]: S_t = K_3 \big\}$. By condition 2 of Definition~\ref{def:nostaticarb}, it must be
\[
-1 < \frac{C^{\tau}_{1, t} - C^{\tau}_{2,t}}{K_1-K_2} < 0 \qquad \Longleftrightarrow \qquad 0 < C^{\tau}_{1,t} - C^{\tau}_{2,t} < K_2 - K_1
\]
so that we can set
\[
C^{\tau}_{1,t} - C^{\tau}_{2,t} \equiv N^{\tau}_{12,t} \cdot (K_2 - K_1)
\]
where $N_{12}$ is a unit-bounded positive martingale, independent of $S$.

By conditions 2 and 3 of Definition~\ref{def:nostaticarb}, it must be
\[
-1 < \frac{C^{\tau}_{1,t} - C^{\tau}_{2,t}}{K_1 - K_2} \leq \frac{C^{\tau}_{2,t} - C^{\tau}_{3, t}}{K_2 - K_3} < 0 \qquad \Longleftrightarrow \qquad 0 < C^{\tau}_{2, t} - C^{\tau}_{3, t} \leq \frac{K_3 - K_2}{K_2 - K_1} \cdot \big( C^{\tau}_{1,t} - C^{\tau}_{2, t}\big) < K_3 - K_2
\]
so that we can set
\[
C^{\tau}_{2, t} - C^{\tau}_{3, t} \equiv N^{\tau}_{23, t} \cdot N^{\tau}_{12, t} \cdot (K_3 - K_2)
\]
where $N_{23}$ is a unit-bounded positive martingale, independent of $S$ and $N_{12}$.

By condition 1 of Definition~\ref{def:nostaticarb}, it must be
\[
S^{\tau}_t - K_1 < C^{\tau}_{1, t} < S^{\tau}_t
\]
so that we can set, as in the single option case,
\begin{equation}\label{eq:C1app}
C^{\tau}_{1, t} \equiv S^{\tau}_t - \big(1 - N^{\tau}_{1,t}\big) \cdot K_1
\end{equation}
where $N_{1}$ is a unit-bounded positive martingale, independent of $S$ and $N_{23}$. It follows that
\begin{align}
	\label{eq:C2app} C^{\tau}_{2,t} & = C^{\tau}_{1,t} - N^{\tau}_{12, t} \cdot (K_2 - K_1) = S^{\tau}_t - \big(1 - N^{\tau}_{1,t}\big) \cdot K_1 - N^{\tau}_{12, t} \cdot (K_2 - K_1) \\
	C^{\tau}_{3, t} & = C^{\tau}_{2,t} - N^{\tau}_{23, t} \cdot N^{\tau}_{12, t} \cdot (K_3 - K_2) \notag \\
	\label{eq:C3app} & = S^{\tau}_t - \big(1 - N^{\tau}_{1,t}\big) \cdot K_1 - N^{\tau}_{12, t} \cdot (K_2 - K_1) - N^{\tau}_{23, t} \cdot N^{\tau}_{12, t} \cdot (K_3 - K_2).	
\end{align}
Since 
\[
\begin{split}
	S^{\tau}_t - K_2 & = S^{\tau}_t - K_1 - K_2 + K_1 \\
	& = S^{\tau}_t - K_1 - (K_2 - K_1) \\
	& < S^{\tau}_t - K_1 - N^{\tau}_{12,t} \cdot (K_2 - K_1) \\
	& < C^{\tau}_{1,t}  - N^{\tau}_{12,t} \cdot (K_2 - K_1) \\
	& = C^{\tau}_{2,t} < S_t^{\tau}
\end{split}
\]
condition 1 of Definition~\ref{def:nostaticarb} holds for $C^{\tau}_{2}$. Analogously
\[
\begin{split}
	S^{\tau}_t - K_3 & = S^{\tau}_t - K_2 - K_3 + K_2 \\
	& = S^{\tau}_t - K_2 - (K_3 - K_2) \\
	& < S^{\tau}_t - K_2 - N^{\tau}_{23,t} \cdot N^{\tau}_{12,t} \cdot (K_3 - K_2) \\
	& < C^{\tau}_{2,t} - N^{\tau}_{23,t} \cdot N^{\tau}_{12,t} \cdot (K_3 - K_2)  \\
	& = C^{\tau}_{3,t} < S_t^{\tau}
\end{split}
\]
which implies that condition 1 of Definition~\ref{def:nostaticarb} holds for $C^{\tau}_{3}$, too. Condition 4 of Definition~\ref{def:nostaticarb} requires
\[
-1 < \frac{C^{\tau}_{1,t} - S^{\tau}_t}{K_1} \leq \frac{C^{\tau}_{2,t} - C^{\tau}_{1,t}}{K_2 - K_1} < 0 \qquad \Longleftrightarrow \qquad N^{\tau}_{1,t} \leq 1 - N^{\tau}_{12,t}.
\]
Finally, we note that $C^{\tau}_1, \, C^{\tau}_2$ and $C^{\tau}_3$ are martingales because they are equal to the sum and product of independent martingales (w.r.t. their natural filtration).

Therefore, assume $K_3 < S_0$ and $N_{1,0} < 1 - N_{12,0}$, and define the stopping time
\[
\xi \coloneqq \tau \wedge \big\{ t \in [0,T]: N_{1,t} = 1 - N_{12,t}\}.
\]
The processes $C^{\xi}_{1,t} \coloneqq C_{1, t \wedge \xi}$, $C^{\xi}_{2,t} \coloneqq C_{2, t \wedge \xi}$, $C^{\xi}_{3,t} \coloneqq C_{3, t \wedge \xi}$, defined as in Eq.~\eqref{eq:C1app}--\eqref{eq:C3app}, are sandwiched martingales satisfying all the no-static-arbitrage conditions of Definition~\ref{def:nostaticarb}. 

For different values of $S_0$, the same reasoning can be applied to provide examples of analogous sandwiched martingales.

\end{appendices}

\newpage
\printbibliography[heading=bibintoc]

@article{BlackScholes:1973,
	author = {Black, F. and Scholes, M.},
	year = {1973},
	title = {The pricing of options and corporate liabilities},
	journaltitle = {Journal of Political Economy},
	volume = {81},
	pages = {637--654},
}

@article{Merton:1973,
	author = {Merton, R. C.},
	year = {1973},
	title = {Theory of rational option pricing},
	journaltitle = {Bell Journal of Economics and Management Science},
	volume = {4},
	pages = {141--183},
}

@article{ChernovGhysels:2000,
	author = {Chernov, M. and Ghysels, E.},
	year = {2000},
	title = {A study towards a unified approach to the joint estimation of objective and risk neutral measures for the purpose of options valuation},
	journaltitle = {Journal of Financial Economics},
	volume = {56},
	pages = {407--458},
}

@article{Jackwerth:2000,
	author = {Jackwerth, J. C.},
	year = {2000},
	title = {Recovering risk aversion from option prices and realized returns},
	journaltitle = {The Review of Financial Studies},
	volume = {13},
	pages = {433--451},
}

@article{AitSahaliaLo:2000,
	author = {A\"it-Sahalia, Y. and Lo, A. W.},
	year = {2000},
	title = {Nonparametric risk management and implied risk aversion},
	journaltitle = {Journal of Econometrics},
	volume = {94},
	pages = {9--51},
}

@article{CarrWu:2009,
	author = {Carr, P. and Wu, L.},
	year = {2009},
	title = {Variance risk premiums},
	journaltitle = {The Review of Financial Studies},
	volume = {22},
	pages = {1311--1341},
}

@article{BollersleveGibsonZhou:2011,
	author = {Bollerslev, T. and Gibson, M. and Zhou, H.},
	year = {2011},
	title = {Dynamic estimation of volatility risk premia and investor risk aversion from option-implied and realized volatilities},
	journaltitle = {Journal of Econometrics},
	volume = {160},
	pages = {235--245},
}

@article{AlosMancinoWang:2019,
	author = {Alòs, E. and Mancino, M. E. and Wang, T.},
	year = {2019},
	title = {Volatility and volatility-linked derivatives: estimation, modeling, and pricing},
	journaltitle = {Decisions in Economics and Finance},
	volume = {42},
	pages = {321--349},
}

@article{Heston:1993,
	author = {Heston, S. L.},
	year = {1993},
	title = {A closed-form solution for options with stochastic volatility with applications to bond and currency options},
	journaltitle = {The Review of Financial Studies},
	volume = {6},
	pages = {327--343},
}

@article{Haganetal:2002,
	author = {Hagan, P. S. and Kumar, D. and Lesniewski, A. and Woodward, D. E.},
	year = {2002},
	title = {Managing smile risk},
	journaltitle = {Wilmott},
	pages = {84--108},
}

@article{CarrSun:2014,
	author = {Carr, P. and Sun, J.},
	year = {2014},
	title = {Implied remaining variance in derivatives pricing},
	journaltitle = {The Journal of Fixed Income},
	pages = {19--32},
}

@article{CarrWu:2016,
	author = {Carr, P. and Wu, L.},
	year = {2016},
	title = {Analyzing volatility risk and risk premium in option contracts: a new theory},
	journaltitle = {Journal of Financial Economics},
	volume = {120},
	pages = {1--20},
}

@article{AvellanedaZhu:1998,
	author = {Avellaneda, M. and Zhu, Y.},
	year = {1998},
	title = {A risk-neutral stochastic volatility model},
	journaltitle = {International Journal of Theoretical and Applied Finance},
	volume = {1},
	pages = {289--310},
}

@unpublished{LedoitSantaClara:1998,
	author = {Ledoit, O. and Santa-Clara, P-},
	year = {1998},
	title = {Relative pricing of options with stochastic volatility},
	note = {Unpublished working paper. University of California, Los Angeles},
}

@article{Schonbucher:1999,
	author = {Schonbucher, P. J.},
	year = {1999},
	title = {A market model for stochastic implied volatility},
	journaltitle = {Philosophical Transactions: Mathematical, Physical and Engineering Sciences},
	volume = {357},
	pages = {2071--2092},
}

@book{Hafner:2004,
	author = {Hafner, R.},
	year = {2004},
	title = {Stochastic implied volatility: a factor-based model},
	publisher = {Springer-Verlag},
	location = {Berlin},
}

@book{Fengler:2005,
	author = {Fengler, M. R.},
	year = {2005},
	title = {Semiparametric modeling of implied volatility},
	publisher = {Springer-Verlag},
	location = {Berlin},
}

@article{DaglishHullSuo:2007,
	author = {Daglish, T. and Hull, J. and Suo, W.},
	year = {2007},
	title = {Volatility surfaces: theory, rules of thumb, and empirical evidence},
	journaltitle = {Quantitative Finance},
	volume = {7},
	pages = {507--524},
}

@article{ElAmraniJacquierMartini:2020,
	author = {El Amrani, M. and Jacquier, A. and Martini, C.},
	year = {2021},
	title = {Dynamics of symmetric SSVI smiles and implied volatility bubbles},
	journaltitle = {SIAM Journal on Financial Mathematics},
	volume = {12},
	pages = {SC-1~--~SC-15},
}

@article{GatheralJacquier:2014,
	author = {Gatheral, J. and Jacquier, A.},
	year = {2014},
	title = {Arbitrage-free SVI volatility surfaces},
	journaltitle = {Quantitative Finance},
	volume = {14},
	pages = {59--71},
}

@book{Wu:2009,
	author = {Wu, L.},
	year = {2009},
	title = {Interest rate modeling. Theory and practice},
	publisher = {Chapman \& Hall/CRC Financial Mathematics Series},
	location = {Boca Raton},
}

@article{HJM:1992,
	author = {Heath, D. and Jarrow, R. and Morton, A.},
	year = {1992},
	title = {Bond pricing and the term structure of interest rates: a new technology for contingent claims valuation},
	journaltitle = {Econometrica},
	volume = {60},
	pages = {77--105},
}

@book{Gatheral:2006,
	author = {Gatheral, J.},
	year = {2006},
	title = {The volatility surface},
	subtitle = {A practitioner's guide},
	publisher = {John Wiley \& Sons, Inc.},
	location = {Hoboken},
}

@misc{Gatheral:2004,
	author = {Gatheral, J.},
	year = {2004},
	title = {A parsimonious arbitrage-free implied volatility parameterization with application to the valuation of volatility derivatives},
	type = {Presentation at Global Derivatives},
}

@article{CarrMadan:2005,
	author = {Carr, P. and Madan, D. B.},
	year = {2005},
	title = {A note on sufficient conditions for no arbitrage},
	journaltitle = {Finance Research Letters},
	volume = {2},
	pages = {125--130},
}

@article{SchweizerWissel:2008b,
	author = {Schweizer, M. and Wissel, J.},
	year = {2008b},
	title = {Arbitrage-free market models for option prices: the multi-strike case},
	journaltitle = {Finance and Stochastichs},
	volume = {12},
	pages = {469--505},
}

@inbook{DavisObloj:2008,
	author = {Davis, M. and Obl\'oj, J.},
	year = {2008},
	title = {Market completion using options},
	booktitle = {Advances in Mathematics of Finance},
	editor = {Stettner, L.},
}

@unpublished{Lyons:1997,
	author = {Lyons, T. J.},
	year = {1997},
	title = {Derivatives as tradable assets},
	note = {RISK 10th Anniversary Global Summit},
}

@online{SunNiu:2016,
	author = {Sun, J. and Niu, Q.},
	year = {2016},
	title = {A Tale in Three Cities: Comparison between GVV, SVI and IRV (Presentation Slides)},
	url = {https://papers.ssrn.com/sol3/papers.cfm?abstract_id=2739302},
}

@article{SchweizerWissel:2008a,
	author = {Schweizer, M. and Wissel, J.},
	year = {2008a},
	title = {Term structures of implied volatilities: absence of arbitrage and existence results},
	journaltitle = {Mathematical Finance},
	volume = {18},
	pages = {77--114},
}

@thesis{Babbar:2001,
	author = {Babbar, K.},
	title = {Aspects of stochastic implied volatility in financial markets},
	year = {2001},
	type = {PhD thesis},
	institution = {Imperial College London},
}

\end{document}